\newcommand{\email}[1]{\texttt{\href{mailto:#1}{#1}}}
\newcommand{\yce}[1]{{\color{black}{#1}}}
\newcommand{\yx}[1]{{\color{black}{#1}}}
\newcommand{\ag}[1]{{\color{black}{#1}}}
\newcommand{\ye}[1]{{\color{black}{#1}}}
\newcommand{\yao}[1]{{\color{black}{#1}}}
\newcommand{\yaor}[1]{{\color{black}{#1}}}
\newcommand{\ben}{\begin{eqnarray}}
\newcommand{\een}{\end{eqnarray}}
\newcommand{\sign}{\mathop{\bf sgn}}
\newcommand{\transpose}{^{\top}}
\newtheorem{thm}{Theorem}
\newtheorem{corollary}{Corollary}
\newtheorem{lemma}{Lemma}
\newcommand{\vect}{\boldsymbol} 
\newcommand{\SNR}{\textsf{SNR}}
\newcommand{\sgn}{\textbf{sgn}}
\newcommand{\y}{\vect{y}}
\newcommand{\w}{\vect{w}}
\newcommand{\ab}{\vect{a}}
\newcommand{\bb}{\vect{b}}
\newcommand{\xb}{\vect{x}}
\newcommand{\zb}{\vect{z}}
\newcommand{\A}{\vect{A}}
\newcommand{\G}{\vect{G}}
\newcommand{\R}{\vect{R}}
\newcommand{\T}{\vect{T}}
\newcommand{\X}{\vect{X}}
\newcommand{\I}{\vect{I}}
\newcommand{\M}{\vect{M}}
\newcommand{\W}{\vect{W}}
\newcommand{\yb}{\vect{y}}
\newcommand{\vb}{\vect{v}}
\newcommand{\wb}{\vect{w}}
\newcommand{\la}{\langle}
\newcommand{\ra}{\rangle}
\title{Reduced-Dimension Multiuser Detection}
\author{Yao Xie,\thanks{Yao Xie (Email: \email{yao.c.xie@gmail.com}) was with the Department of Electrical Engineering at the Stanford University, and is currently with the Department of Electrical and Computer Engineering at the Duke University.}\quad
\and  Yonina C. Eldar,\thanks{Yonina C. Eldar  (Email: \email{yonina@ee.technion.ac.il}) is with the Department of Electrical Engineering, Technion, Israel Institute of Technology, and was visiting
the Department of Electrical Engineering at the Stanford University.} \and
\quad Andrea Goldsmith
\thanks{Andrea Goldsmith (Email: \email{andrea@wsl.stanford.edu}) is with the Department of
Electrical Engineering at the Stanford University.}
\thanks{This work is partially supported by the Interconnect Focus Center of the Semiconductor Research Corporation, BSF Transformative Science Grant 2010505, AFOSR grant FA9550-08-1-0010, and a Stanford General Yao-Wu Wang Graduate Fellowship. The paper was presented [in part] at the 48th Annual Allerton Conference on Communication, Control, and Computing in July 2010, and the  IEEE International Conference on communications (ICC) in June 2012.}
} \date{\today}
\date{\today}
\begin{document}

\singlespacing

\maketitle
\begin{abstract}

    We present a reduced-dimension multiuser detector (RD-MUD) structure \yao{for synchronous systems} that significantly decreases the number of required correlation branches at the receiver front-end, while still achieving performance similar to that of the conventional matched-filter (MF) bank. RD-MUD exploits the fact that, in some wireless systems, the number of active users may be small relative to the total number of users in the system. Hence, the ideas of analog compressed sensing may be used to reduce the number of correlators. The correlating signals used by each correlator are chosen as an appropriate linear combination of the users' spreading waveforms.  We derive the probability-of-symbol-error when using two methods for recovery of active users and their transmitted symbols: the reduced-dimension decorrelating (RDD) detector, which combines subspace projection and thresholding to determine active users and sign detection for data recovery, and the reduced-dimension decision-feedback (RDDF) detector, which combines decision-feedback  matching pursuit for active user detection and sign detection for data recovery. 
    We derive probability of error bounds for both detectors, and show that the number of correlators needed to achieve a small probability-of-symbol-error is on the order of the logarithm of the number of users in the system. The theoretical performance results are validated via numerical simulations.
    
\end{abstract}

\section{Introduction}\label{sec:intro}

Multiuser detection (MUD) is a classical problem in multiuser communications and signal processing (see, e.g., \cite{verduMUD1998,SchlegalGrant2006,Honig2009} and the references therein.) In multiuser systems, the users communicate simultaneously with a given receiver by modulating information symbols onto their unique signature waveforms. The received signal consists of a noisy version of the superposition of the transmitted waveforms. MUD has to detect the symbols of all users simultaneously.

Despite the large body of work on MUD,  it is not yet widely implemented in practice, largely due to its complexity and high-precision \yx{analog-to-digital} (A/D) requirements. The complexity arises both in the A/D as well as in the digital signal processing for data detection of each user. A conventional MUD structure consists of a matched-filter (MF) bank front-end followed by a linear or nonlinear digital multiuser detector. The MF-bank is a set of correlators, each correlating the received signal with the signature waveform of a different user. The number of correlators is therefore equal to the number of users. In a typical communication system, there may be thousands of users. We characterize the A/D complexity by the number of correlators at the receiver front-end, and measure data detection complexity by the number of real floating point operations required per \yce{decision bit} \cite{LupasVerdu1989} from the MF-bank output.

Verd\'{u}, in the landmark paper \cite{Verdu1986}, established the maximum likelihood sequence estimator (MLSE) as the MUD detector  minimizing the probability-of-symbol-error for data detection.  \yx{However, the complexity-per-bit of MLSE is exponential in the number of users when the signature waveforms are nonorthogonal.} To address the complexity issue, other \yx{low-complexity} suboptimal detectors have been developed, \yx{including the nonlinear}
decision feedback (DF) detector \cite{Varanasi1999} and \yx{linear detectors}.
The non-linear DF detector is based on the idea of interference cancellation, which decodes symbols iteratively by subtracting the detected symbols of strong users first to facilitate detection of weak users. The DF detector is a good compromise between complexity and performance (see, e.g., \cite{Varanasi1999}). We will therefore analyze the DF detector below as an example of a nonlinear \yx{digital} detector, but in a reduced dimensional setting. 

\yx{Linear detectors apply a linear transform to the receiver front-end output and then detect each user separately}. They have lower complexity than nonlinear methods but also worse performance. \ag{There are multiple} linear MUD techniques. 
The single-user detector is the simplest linear detector, \yx{however it suffers from user interference when signature waveforms are nonorthogonal.}
%
A linear detector that eliminates user inference is the decorrelating detector, 
which, for each user, projects the received signal onto the subspace associated with the signature waveform of that user. This projection amplifies noise when the signature waveforms are nonorthogonal. 
The decorrelating detector provides the best joint estimate of symbols and amplitudes in the absence of knowledge of the complete channel state information \cite{LupasVerdu1989}.
\yx{The \yce{Minimum Mean-Square-Error} (MMSE) detector takes into account both background noise and interference, and hence to some extent mitigates the noise amplification of the decorrelating detector in  low and medium SNR} \cite{Varanasi1999}. 
Because of \yx{the simplicity and interference elimination capability} of the decorrelating detector, we will focus on this detector as an example of a linear detector in the reduced-dimensional setting.

In many applications, the number of active users, $K$, can be much smaller than the total number of users, $N$ \cite{ApplebaumBajwaDuarte2011, FletcherRanganGoyal2010}. This analog signal sparsity allows the use of techniques from analog compressed sensing \cite{EldarSI2009, DuarteEldar2011} in order to reduce the number of correlators. While such sparsity has been exploited in various detection settings, there is still a gap in applying these ideas to the multiuser setting we consider here. Most existing work on exploiting compressed sensing \cite{CandesTao2006,Donoho2006} for signal detection assumes discrete signals and then applies compressed sensing via matrix multiplication
\cite{FletcherRanganGoyal2010,FletcherRanganGoyal2009,JinKimRao2010,HauptNowak2006,ZhuGiannakis2011}. 
In contrast, in multiuser detection the \ag{received} signal is continuous. Another complicating factor \ag{relative to previous work} is that here noise is added in the analog domain prior to \ag{A/D conversion at the front-end, which corresponds to the measurement stage in compressed sensing}. Therefore, A/D conversion will affect both signal and noise. Due to the MFs \ag{at the front-end}, the output noise vector is in general colored. Furthermore, it cannot be whitened without modifying the MF coefficient matrix, which corresponds to the measurement matrix in compressed sensing. In the discrete compressed sensing literature, it is usually assumed that white noise is added after measurement. \ye{An exception is the work of \cite{CastroEldar2011}.}  Finally, typically in compressed sensing the goal is to \textit{reconstruct} a sparse signal \ag{from its samples}, whereas in MUD the goal is to \textit{detect} both active users and their symbols. To meet the goal of MUD  we therefore adapt algorithms from compressed sensing for detection and develop results on the probability-of-symbol-error rather than on the mean-squared error (MSE).

In this work, we develop a low complexity MUD structure which we call  a reduced-dimension multiuser detector (RD-MUD) exploiting analog signal sparsity, assuming symbol-rate synchronization. The RD-MUD reduces the front-end \ag{receiver} complexity by decreasing the number of correlators without  increasing the complexity of digital signal processing, while still achieving performance similar to that of conventional MUDs that are based on the MF-bank front-end. The RD-MUD converts the \ag{received} analog signal into a discrete \ye{output} by correlating it with $M \ll N$ correlating signals. We incorporate analog compressed sensing techniques \cite{EldarSI2009} by forming the correlating signals as linear combinations of the signature waveforms via a (possibly complex) coefficient matrix $\A$.  
The RD-MUD output can thus be viewed as a projection of the MF-bank output onto a lower dimensional detection subspace. We then develop several digital detectors to detect both active users and their transmitted symbols, by combining ideas from compressed sensing and conventional MUD.   We study two such detectors in detail: the reduced-dimension decorrelating (RDD) detector, a linear detector that combines subspace projection and thresholding to determine active users with a sign detector for data recovery \cite{GribonvalMailheRauhut2007,BlumensathDavies2009}, and the reduced-dimension decision-feedback (RDDF) detector, a nonlinear detector that combines decision-feedback  matching pursuit (DF-MP) \cite{PatiRezaiifar1993,Tropp2004} for active user detection with sign detection for data recovery in an iterative manner. DF-MP differs from conventional MP \cite{PatiRezaiifar1993,Tropp2004} in that in each iteration the binary-valued detected symbols, rather than the real-valued estimates, are subtracted from the received signal to form the residual used by the next iteration. 

We provide \ag{probability-of-symbol-error} performance \ag{bounds} for these detection algorithms, using the coherence of the matrix $\A$ in a non-asymptotic regime with a fixed number of \ag{total} users and active users. Based on these results, we develop a lower bound on the number of correlators $M$ needed to attain a certain \ag{probability-of-symbol-error} performance. When $\A$ is a random partial discrete Fourier transform matrix, the $M$ required by these two specific detectors is on the order of $\log N$ as compared to $N$ correlators  required for conventional MUD. We validate these theoretical results via numerical examples.  Our analysis is closely related to \cite{Ben-HaimEldarElad2010}. However, \cite{Ben-HaimEldarElad2010} considers estimation in white noise, which differs from our problem in the aforementioned aspects. Our work also differs from prior results on compressed sensing for MUD, such as Applebaum et.al. \cite{ApplebaumBajwaDuarte2011} and Fletcher et.al. \cite{FletcherRanganGoyal2010,ApplebaumBajwaDaurte2010}, where a \ag{so-called} on-off random access channel is considered. In these references, the goal is to detect which users are active, and there is no need to detect the transmitted symbols as we consider here. Neither of these works consider front-end complexity.

In this paper we focus on a synchronous MUD channel model \cite{LupasVerdu1989}, where the transmission rate of all users is the same and their symbol epochs are perfectly aligned.  This user synchronization can be achieved using GPS as well as distributed or centralized synchronization schemes (see, e.g., \cite{BeekSync1999, Morelli2004}). Such methods are commonly used in cellular systems, ad-hoc networks, and sensor networks to achieve synchronization. 
Part of the MUD problem is signature sequence selection, for which there has also been a large body of work (see, e.g., \cite{VerduShamai1999}).  Here we do not consider optimizing signature waveforms so that our results are parameterized by the crosscorrelation properties of the signature waveforms used in our design.

The rest of the paper is organized as follows. Section \ref{sec:CD_principle} discusses the system model and reviews conventional detectors using the MF-bank front-end. Section \ref{sec:RD-MUD} presents the RD-MUD front-end and detectors. Section \ref{sec:RD-MUD-performance} contains the theoretical performance guarantee of two RD-MUD detectors: the RDD and RDDF detectors. Section \ref{sec:numerical_eg} validates the theoretical results through numerical examples.

\section{Background} \label{sec:CD_principle}

\subsection{Notation}

The notation we use is as follows. Vectors and matrices are denoted by boldface lower-case and boldface upper-case letters, respectively. 
The real and complex numbers are represented by $\mathbb{R}$ and $\mathbb{C}$, respectively.
The real part of a scalar $x$ is denoted as $\Re[x]$, and $x^*$ is its conjugate.
The set of indices of the nonzero entries of a vector $\xb$ is called the support of $\xb$. Given an index set $\mathcal{I}$, $\X_{\mathcal{I}}$ denotes the submatrix formed by the columns of a matrix $\X$ indexed by $\mathcal{I}$, and  $\xb_{\mathcal{I}}$ represents the subvector formed by the entries indexed by $\mathcal{I}$.
The identity matrix is denoted by $\vect{I}$. The transpose, conjugate transpose, and inverse of a matrix $\vect{X}$ are represented by $\vect{X}\transpose$, $\X^H$, and $\vect{X}^{-1}$, respectively, and $\X_{nm}$ denotes its $nm$ th value. 
%
%
The $\ell_2$ norm is denoted by $\|\vect{x}\| = (\vect{x}^H\vect{x})^{1/2}$, \yao{and the $\ell_\infty $ norm of a vector $\vect{x}$ is given by $\|\vect{x}\|_\infty = \max_{n=1}^N |x_n|$.}
The minimum and maximum eigenvalues of a positive-semidefinite matrix $\X$ are represented by $\lambda_{\min}(\X)$ and $\lambda_{\max}(\X)$, respectively. The trace of a square matrix $\X$ is denoted as $\mbox{tr}(\X)$. 
The notation $\mbox{diag}\{x_1, \ldots, x_n\}$ denotes a diagonal matrix with $x_1, \ldots, x_n$ on its diagonal.
We use $\I$ to denote the identity matrix and $\textbf{1}$ to denote an all-one vector.

The function $\delta_{nm}$ is defined such that $\delta_{nm} = 1$ only when $n = m$ and otherwise is equal to 0. 
The sign function is defined as $\sgn(x) = 1$, if $x > 0$, $\sgn(x) = -1$, if $x<0$, and otherwise is equal to 0.
The expectation of a random variable $x$ is denoted as $\mathbb{E}\{x\}$ and the probability of an event $\mathcal{A}$ is represented as $P(\mathcal{A})$. 
The union, intersection, and difference of two sets $\{A\}$ and $\{B\}$ are denoted by $\{A\}\cup \{B\}$, $\{A\} \cap \{B\}$, and $\{A\}/\{B\}$, respectively. 
The complement of a set $\{A\}$ is represented as $\{A\}^c$. 
The notation $A\subset B$ means that set $A$ is a subset of $B$. 
The \textit{inner product} (or \textit{crosscorrelation}) between two real analog signals $x(t)$ and $y(t)$ in $L_2$ is written as
$\la x(t), y(t)\ra = \frac{1}{T}\int_{0}^T x(t) y(t) dt,$ 
over the symbol time $T$. The $L_2$ norm of $x(t)$ is 
$\|x(t)\| = \langle x(t), x(t) \rangle^{1/2}$. 
Two signals are orthogonal if their crosscorrelation is zero.  
%

\subsection{System Model}\label{sec:model}

Consider a \ye{synchronous} multiuser system  \cite{verduMUD1998} with $N$ users. Each user is assigned a known unique signature waveform from a set $\mathcal{S} = \{s_n(\cdot): [0, T]\rightarrow \mathbb{R}, 1 \leq n \leq N\}$. 
Users modulate their data by their signature waveforms. 
There are $K$ active users out of $N$ possible users transmitting to the receiver. 
In our setting, we assume that active users modulate their signature waveforms using Binary Phase Shift Keying (BPSK)  modulation with the symbol of user $n$ denoted by $b_n \in \{1, -1\}$, for $n\in\mathcal{I}$, where $\mathcal{I}$ contains indices of all active users. The amplitude of the $n$th user's signal at the receiver is given by $r_n$, which is determined by the transmit power and the wireless channel gain. For simplicity, we assume $r_n$'s are real (but they can be negative), and known at the receiver.
The nonactive user can be viewed as transmitting with power $P_n = 0$, or equivalently transmitting zeros: $b_n = 0$, for $n\in\mathcal{I}^c$.
%
The received signal $y(t)$ is a superposition of the transmitted signals from the active users, plus white Gaussian noise $w(t)$ with zero-mean and variance $\sigma^2$:
\begin{equation}
    y(t) =  \sum_{n=1}^N r_n b_n s_n(t) + w(t), \qquad t \in [0, T], \label{sig_model}
\end{equation}
with $b_n \in \{1, -1\}$, $n\in\mathcal{I}$, and $b_n = 0$, $n \in \mathcal{I}^c$. The duration of the data symbol $T$ is referred to as the symbol time, which is also equal to the inverse of the data rate for binary modulation.

We assume that the signature waveforms are linearly independent. The crosscorrelations of the signature waveforms are characterized by the Gram matrix $\G$, defined as
\begin{equation}
[\G]_{n\ell}\triangleq \la s_n(t), s_\ell(t) \ra, \qquad 1\leq n, \ell \leq N.
\end{equation}
For convenience, we assume that $s_n(t)$ has unit energy: $\|s_n(t)\| = 1$ for all $n$ so that $[\G]_{nn} = 1$. Due to our assumption of linear independence of the signature waveforms, $\G$ is invertible. In addition, the signatures typically have low crosscorrelations, so that the magnitudes of the off-diagonal elements of $\G$ are much smaller than 1. 

Our goal is to detect the set of active users, i.e. users with indices in $\mathcal{I}$, and their transmitted symbols $\{b_n: n\in\mathcal{I}\}$. In practice the number of active users $K$ is typically much smaller than the total number of users $N$, which is a form of \yx{analog signal sparsity}.  As we will show, this sparsity enables us to reduce the number of correlators at the front-end and still be able to achieve performance similar to that of a conventional MUD using a bank of MFs. We will consider two scenarios: the case where $K$ is known, and the case where $K$ is bounded but unknown. The problem of estimating $K$ can be treated using techniques such as those in \cite{BiglieriLops2007}.

%

\begin{figure}[h]
    \centering
    \subfloat[]
    {\label{fig:MF_bank}
        \includegraphics[width=.55\linewidth]{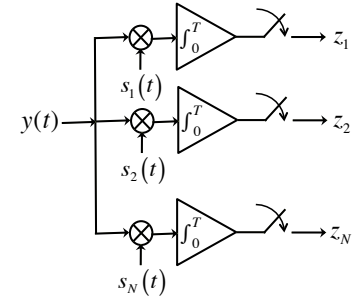}}
        \qquad
     \subfloat[]
     {\label{fig:RD_MUD}
          \includegraphics[width=.55\linewidth]{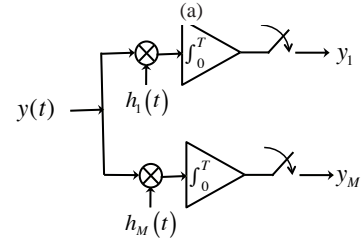}}
          \vspace{0.1in}
        \caption{Front-end of (a) conventional MUD using MF-bank, and (b) RD-MUD.}
        \label{Fig:front_end}
    \end{figure}
    
    \begin{figure}[h]
    \centering
    \subfloat[]
    {\label{fig:linear_detector}
        \includegraphics[width=.85\linewidth]{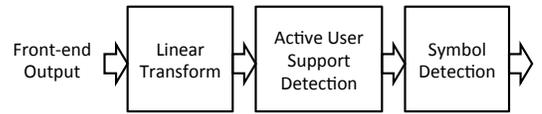}}
        \qquad
     \subfloat[]
     {\label{fig:non_linear_detector}
          \includegraphics[width=.55\linewidth]{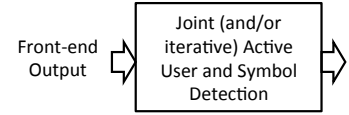}}
          \vspace{0.1in}
        \caption{The diagram of (a) linear detector, and (b) nonlinear detector.}
        
        \label{Fig:detector}
    \end{figure}

\subsection{Conventional MUD}

A conventional MUD detector has a MF-bank front-end followed by a digital detector. We now review this architecture.

\subsubsection{MF-bank front-end}

For general single-user systems, the MF
multiplies the received signal $y(t)$ with \yx{the single user waveform} $s(t)$ and integrates over a symbol time. 
The MF-bank is an extension of the MF for multiple users, and has $N$ MFs in parallel: the $n$th branch correlates the received signal with 
the corresponding signature waveform $s_n(t)$, as illustrated in Fig. \ref{fig:MF_bank}. 
%
The output of the MF-bank is a set of sufficient statistics for MUD when the amplitudes $r_n$ are known \cite{verduMUD1998}. 
%
Using (\ref{sig_model}), the output \yx{of the MF-bank can be written as}
\begin{equation}
    \zb = \G \R \bb + \vect{u},\label{MF_bank}
\end{equation}
\yx{where $\zb = [z_1, \cdots, z_N]\transpose$, $z_n = \la y(t), s_n(t) \ra$,  $\R\in\mathbb{R}^{N\times N}$ is a diagonal matrix with $[\R]_{nn} = r_n$, $\bb = [b_1, \cdots, b_N]\transpose$ and $\vect{u} = [u_1, \cdots, u_N]\transpose$, where $u_n = \la w(t), s_n(t) \ra$. The vector $\vect{u}$ is Gaussian distributed with zero mean and covariance  $\mathbb{E}\{\vect{u}\vect{u}^H\} = \sigma^2 \G$ (for derivation see \cite{verduMUD1998})}.

\subsubsection{MF-bank detection}\label{sec:MF_MUD}

Conventional MUD detectors based on the MF-bank output can be classified into two categories: linear and nonlinear, as  illustrated in Fig. \ref{Fig:detector}. In the literature, the synchronous MUD model typically assumes all users are active, i.e. $b_n \in \{1, -1\}$, and hence the goal of the MUD detectors is to detect all user symbols. 
The linear detector applies a linear transform $\vect{T}$ to the MF-bank output (illustrated in Fig. \ref{fig:linear_detector}):
\begin{equation}
\vect{T}\zb = \vect{T}\G \R \bb + \vect{T}\vect{u},  \label{MUD_output}
\end{equation}
and detects each user's symbol separately using a sign detector:
\begin{equation}
\hat{b}_n = \sign(r_n[\T\zb]_n), \qquad 1\leq n\leq N. \label{sign_detector}\end{equation}

Several commonly used linear detectors are the single-user detector, the decorrelating detector and the minimum-mean-square-error (MMSE) detector. The single-user detector \cite{LupasVerdu1989} is equivalent to choosing $\T = \I$ in (\ref{MUD_output}) and (\ref{sign_detector}). \yx{By applying a linear transform $\T = \G^{-1}$ in (\ref{MUD_output}), the decorrelating detector can remove the user interference and recover symbols perfectly in the absence of noise;} however, it also amplifies noise when $\G\neq \I$. 
The MMSE detector minimizes the MSE between the linear transform of the MF-bank output and symbols, and corresponds to $\T = (\G + \sigma^2\R^{-2})^{-1}$ in (\ref{MUD_output}) \cite{verduMUD1998}. 


Nonlinear detectors, on the other hand, detect symbols jointly and (or) iteratively as illustrated in Fig. \ref{fig:non_linear_detector}. Examples include MLSE and the successive interference cancellation (SIC) detector \cite{verduMUD1998}.
The MLSE solves the following optimization problem:
\begin{equation}
\max_{b_n \in \{1, -1\}} 2\yb^H\R\bb - \bb^H \R\G\R\bb. \label{ML_MF}
\end{equation}
However, when the signature waveforms are nonorthogonal this optimization problem is exponentially complex in the number of users \cite{verduMUD1998}. 
The SIC detector first finds the active user with the largest amplitude, detects its symbol, subtracts its effect from the received signal, and iterates the above process using the residual signal. After $N$ iterations, the SIC detector determines all users. 

\section{Reduced-Dimension MUD (RD-MUD)}\label{sec:RD-MUD}

The RD-MUD front-end, illustrated in Fig. \ref{fig:RD_MUD}, correlates the received signal $y(t)$ with a set of correlating signals $h_m(t)$, $m = 1, \cdots M$, where $M$ is typically much smaller than $N$.  The front-end output is processed by either a linear or nonlinear detector to detect active users and their symbols; \ye{the design of these detectors is adapted to take the analog sparsity into account.}

\subsection{RD-MUD front-end}\label{sec:RD-FE}
Design of the correlating signals $h_m(t)$ is key for RD-MUD to reduce the number of correlators. To construct these signals, we rely on the ideas introduced in \cite{EldarSI2009} to construct multichannel filters that sample \ye{structured} analog signals at sub-Nyquist rates. 
Specifically, we use the biorthogonal signals with respect to $\{s_n(t)\}$, which are defined as:
\begin{equation}
\hat{s}_n(t) = \sum_{\ell=1}^N  [\G^{-1}]_{n\ell} s_\ell(t), \qquad 1\leq n \leq N.
\end{equation}
These signals have the property that $\la s_n(t), \hat{s}_m(t) \ra = \delta_{nm}$, for all $n$, $m$. Note that when $\{s_n(t)\}$ are orthogonal, $\G =\I$ and $\hat{s}_n(t) = s_n(t)$.

The RD-MUD front-end uses as its correlating signals the functions
\begin{equation}
h_m(t) = \sum_{n = 1}^{N} a_{mn} \hat{s}_n(t), \qquad 1\leq m\leq M, \label{h_def}
\end{equation}
where $a_{mn}$ are (possibly complex) weighting coefficients. 
Define a coefficient matrix $\A \in \mathbb{R}^{M\times N}$ with $[\A]_{mn} \triangleq a_{mn}$ and denote the $n$th column of $\A$ as $\ab_n \triangleq [a_{1n}, \cdots, a_{Mn}]\transpose$, $n = 1, \cdots, N$. We normalize the columns of $\A$ so that $\|\ab_n\| = 1$.
The design of the correlating signals is equivalent to the design of  $\A$ for a given $\{s_n(t)\}$. Evidently, the performance of RD-MUD will depend on $\A$. We will use coherence as a measure of the quality of $\A$, which is defined as:
\begin{equation}
\mu\triangleq \max_{n\neq \ell}\left|\ab_n^H \ab_\ell \right|.\label{def_coherence}
\end{equation}
As we will show later in Section \ref{sec:single_user_det}, it is desirable that 
$\mu$ is small \ye{to guarantee small probability-of-symbol-error}. This requirement also reflects a tradeoff in choosing how many correlators to use in the RD-MUD front-end. With more correlators, the coherence of $\A$ can be lower and the performance of RD-MUD improves. 

Choosing the correlating signals (\ref{h_def}) and using the receive signal model (\ref{sig_model}), the output of the $m$th correlator is given by:
\begin{equation}
\begin{split}
       & y_{m} = \la h_m(t), y(t) \ra\\
&        = \left\la \sum_{n=1}^N a_{mn}
        \hat{s}_n(t), \sum_{\ell=1}^N r_\ell b_\ell s_\ell(t) \right\ra \\
        &+ \left\la \sum_{n = 1}^{N} a_{mn} \hat{s}_n(t), w(t) \right\ra \\
        &= \sum_{\ell =1}^N  r_\ell b_\ell \sum_{n = 1}^N a_{mn} \la \hat{s}_n(t), s_\ell(t)\ra + {w}_m \\
&        = \sum_{\ell=1}^N a_{m\ell}  r_\ell b_\ell + {w}_m, \end{split}
        \label{eq1}
\end{equation}
where \yx{the output noise is given by} $w_m \triangleq  \sum_{n = 1}^{N} a_{mn} \left\la\hat{s}_n(t), w(t) \right\ra$. 
Denoting $\yb = [y_{1}, \cdots, y_{M}]\transpose$ and ${\wb} = [w_1, \cdots,w_M]\transpose$, we can express the RD-MUD output (\ref{eq1}) in vector form as
\begin{equation}
    \yb = \A \R\bb + {\wb},\label{RD_MUD_model}
\end{equation}
where $\wb$ is a Gaussian random vector with zero mean and covariance $\sigma^2 \A\G^{-1}\A^H$ (for derivation see \cite{Xie2011PhD,XieEldarGoldsmith2010}). The vector $\yb$ can be viewed as a linear projection of the MF-bank front-end output onto a lower dimensional subspace which we call the \textit{detection subspace}. Since there are at most $K$ active users, $\bb$ has at most $K$ non-zero entries. The idea of RD-MUD is that when the original signal vector $\bb$ is sparse, with proper choice of the matrix $\A$, the detection performance for $\bb$ based on $\yb$ of (\ref{RD_MUD_model}) in the detection subspace can be similar to the performance based on the output of the MF-bank front-end $\zb$ of (\ref{MF_bank}).

\subsection{RD-MUD \yx{detection}}

We now discuss how to recover $\bb$ from the RD-MUD front-end output  using digital detectors. The model (\ref{RD_MUD_model}) for RD-MUD has a similar form to the observation model in the compressed sensing literature \cite{FletcherRanganGoyal2009,Ben-HaimEldarElad2010}, except that the noise in the RD-MUD front-end output is colored. Hence, to recover $\bb$, we can \yx{combine ideas developed in the context of compressed sensing and MUD}. 
The linear detector for RD-MUD first estimates active users $\hat{\mathcal{I}}$ using support recovery techniques from compressed sensing. 
%
Once the active users are estimated, 
we can write the RD-MUD front-end output model (\ref{RD_MUD_model}) as
\begin{equation}
\yb = \A_{\hat{\mathcal{I}}}\R_{\hat{\mathcal{I}}}\bb_{\hat{\mathcal{I}}} + \wb,\label{restrict}
\end{equation}
\yx{from which we can detect} the symbols $\bb_{\hat{\mathcal{I}}}$ by applying a linear transform to $\yb$. 
The nonlinear detector for RD-MUD detects active users and their symbols jointly (and/or iteratively).

We will focus on recovery based on two algorithms: (1) the RDD detector, a linear detector that uses subspace projection along with thresholding \cite{BlumensathDavies2009,FletcherRanganGoyal2009} to determine active users and sign detection for data recovery; (2) the RDDF detector, a nonlinear detector that combines decision-feedback matching pursuit (DF-MP) for active user detection and sign detection for data recovery. These two algorithms are summarized in \yx{Algorithms \ref{alg-1} and \ref{alg-2}}.

\subsubsection{RDD detector}\label{sec:RDD}

A natural strategy for detection is to compute the inner product $\ab^H\yb$ 
and  
\yx{detect active users by choosing indices corresponding to the $K$ largest magnitudes of these inner products:}
\begin{equation}
\begin{split}
&\hat{\mathcal{I}} = \{n: \quad \mbox{if $|\Re[\ab_n^H \yb]|$ is among the }\\
&\mbox{$K$ largest of $|\Re[\ab_n^H\yb]|$, $n = 1, \cdots, N$} \}.
\end{split}\label{support}
\end{equation}
\yx{This corresponds to the thresholding support recovery algorithm in compressed sensing (e.g. \cite{FletcherRanganGoyal2009}).}
To detect symbols, we use sign detection:
\begin{equation}
    \hat{b}_n = \left\{
    \begin{array}{cc}
     \sgn\left(r_n\Re[\ab_n^H \yb] \right), & n\in\hat{\mathcal{I}};\\
     0, &n\notin\hat{\mathcal{I}}.
     \end{array}\right.
    \label{RD-MUD-sign}
\end{equation}
In detecting active users (\ref{support}) and their symbols (\ref{RD-MUD-sign}), we take the real parts of the inner products because the imaginary part of $\ab_n^H \yb$ contains only noise and interference, since we assume that symbols $b_n$ and amplitudes $r_n$ are real and only $\A$ can be complex.  When $K = N$ and $M = N$, the RDD detector becomes the decorrelator in  conventional MUD.

To compute the complexity-per-bit of the RDD detector \ye{we note that} computing $\A^H \yb$ requires $MN$ floating point operations when $\A$ is real (or $2MN$ operations when $\A$ is complex) for detection of $N\log_2 3$ bits (since equivalently we are detecting $b_n\in \{-1, 0, 1\})$. Hence the complexity-per-bit of RDD is proportional to $M$. 
Since $M\leq N$ in RD-MUD, the complexity-per-bit of RDD (and other RD-MUD linear detectors as well) is \ye{lower than} that of the conventional \ye{decorrelating} linear MUD detector. Furthermore, RDD and other linear RD-MUD detectors require much lower complexity in the analog front-end. 

When the number of users is not known, we can replace Step 2 in Algorithm \ref{alg-1} by 
\begin{equation}
\hat{\mathcal{I}} = \{n \in \{1, \ldots, N\}: |\Re[\ab_n^H \yb]| > \xi \},
\end{equation}
where $\xi > 0$ is a chosen threshold.  We refer to this method as the RDD threshold (RDDt) detector. The RDDt detector is related to the OST algorithm for model selection in \cite{BajwaCalderbankJafarpour2010}. \yaor{The choice of $\xi$ depends on $r_n$, $\sigma^2$, $M$, $N$, $\mu$ and the maximum eigenvalue of $\G^{-1}$. Bounds on $\xi$ associated with error probability bounds will be given in Theorem \ref{thm_noisy}. In Section \ref{sec:numerical_eg} we explore numerical optimization of $\xi$, where we find that to achieve good performance, $\xi$ should increase with $N$ or $K$, and decrease with $M$.}

\algsetup{indent=2em}
 \begin{algorithm}[h]
 \caption{RDD detector}
 \begin{algorithmic}[1]
\STATE  \textbf{Input}: An $M\times N$ matrix $\A$, a vector $\yb \in \mathbb{C}^{M}$ and the number of active users $K$.
\STATE   Detect active users: find $\hat{\mathcal{I}}$ that contains indices of the $K$ largest  $|\Re[\ab_n^H \yb]|$.
\STATE  Detect symbols: $\hat{b}_n = \sign(r_n \Re[\ab_n^H \yb])$ for $n\in\hat{\mathcal{I}}$, and $\hat{b}_n = 0$ for $n\notin \hat{\mathcal{I}}$.
 \end{algorithmic}
 \label{alg-1}
 \end{algorithm}
\algsetup{indent=5em}
 \begin{algorithm}[h!]
 \caption{RDDF detector}
 \begin{algorithmic}[1]
 \STATE  \textbf{Input}: An $M\times N$ matrix $\A$, a vector $\yb \in \mathbb{C}^{M}$ and number of active users $K$.
\STATE Initialize: $\mathcal{I}^{(0)}$ is empty, $\bb^{(0)} = 0$, $\vb^{(0)} = \yb$.
\STATE Iterate Steps 4 -- 6 for $k = 1, \cdots, K$:
\STATE Detect active user: $n_k = \arg\max_{n}|\Re[\ab_n^H\vb^{(k-1)}]|$.
\STATE Detect symbol: ${b}_n^{(k)} = \sign(r_{n_k}\Re[\ab_{n_k}^H \vb^{(k-1)}])$, for $n = n_k$, and ${b}_n^{(k)} = {b}_n^{(k-1)}$ for $n \neq n_k$.
\STATE  Update: ${\mathcal{I}}^{(k)} = {\mathcal{I}}^{(k-1)}\cup\{n_k\}$,  \\and $\vb^{(k)} = \yb - \A\R\bb^{(k)}$.
\STATE Output: $\hat{\mathcal{I}} = \mathcal{I}^{(K)}$, $\hat{\bb} = \bb^{(K)}$.
 \end{algorithmic}
 \label{alg-2}
 \end{algorithm}

\subsubsection{RDDF detector}\label{sec:algorithm_II}

The RDDF detector determines active users and their corresponding symbols iteratively. It starts with an empty set as the initial estimate for the set of active user $\mathcal{I}^{(0)}$, zeros as the estimated symbol vector $\bb^{(0)} = \vect{0}$, and the front-end output as the residual vector $\vb^{(0)}=\y$. Subsequently, in each iteration $k = 1, \cdots, K$, the algorithm selects the column $\ab_n$ that is most highly correlated with the residual $\vb^{(k-1)}$ as the detected active user in the $k$th iteration: 
\begin{equation}
n_k=\arg\max_n \left|\Re[\ab_n^H \vb^{(k-1)}]\right|,\label{support_OMP}
\end{equation} 
which is then added to the active user set ${\mathcal{I}}^{(k)} = {\mathcal{I}}^{(k-1)}\cup\{n_k\}$.
The symbol for user $n_k$ is detected with other detected symbols staying the same:
\begin{equation}
{b}_{n}^{(k)} = \left\{
\begin{array}{cc} \sign\left(r_{n_k} \Re[\ab_{n_k}^H \vb^{(k-1)}]\right), & n = n_k;\\
{b}_n^{(k-1)}, & n \neq n_k.
\end{array}\right.\label{symbol_OMP}
\end{equation}
The residual vector is then updated through $\vb^{(k)} = \yb - \A\R\bb^{(k)}. $ 
The residual vector represents the part of $\bb$ that has yet to be detected by the algorithm along with noise. The iteration repeats $K$ times (as we will show, with high probability DF-MP never detects the same active user twice), and finally the active user set is given by $\hat{\mathcal{I}}={\mathcal{I}}^{(K)}$ with the symbol vector $\hat{b}_n = b^{(K)}_n$, $n = 1, \cdots, N$. When $K = N$ and $M = N$, the RDDF detector becomes the successive interference cancelation technique in the conventional MUD.

The complexity-per-bit of RDDF is proportional to $KM$. Since $M\leq N$, this implies that the complexity for data detection of RDDF is \ye{lower than} that of the conventional DF detector (the complexity-per-bit of the DF detector is proportional to $KN$). \ye{Note that RDDF is similar to MP in compressed sensing but with symbol detection.}

We can modify the RDDF detector to account for an unknown number of users by iterating only when the residual does not contain any significant ``component'', i.e., when $\|A^H \vb^{(k)}\|_\infty < \epsilon$ for some threshold $\epsilon > 0$. We refer to this method as the RDDF threshold (RDDFt) detector. \yaor{The choice of   $\epsilon$ depends on $r_n$, $\sigma^2$, $M$, $N$, $\mu$ and the maximum eigenvalue of $\G^{-1}$. As with the threshold $\xi$, bounds on $\epsilon$ to ensure given error probability bounds, and its numerical optimization, are presented in Theorem \ref{thm_noisy} and Section \ref{sec:numerical_eg}, respectively. As in the RDDt, here we also found in numerical optimizations that $\epsilon$ should increase with $N$ or $K$, and decrease with $M$.}

\subsubsection{Noise whitening transform}\label{sec:prew}

The noise in the RD-MUD output (\ref{RD_MUD_model}) is in general colored due to the matched filtering at the front-end. We can whiten the noise by applying a linear transform $(\A\G^{-1}\A^H)^{-1/2}$ before detecting active users and symbols, as illustrated in Fig. \ref{Fig:prewhitening}. 
The whitened output is given by:
\begin{equation}
\begin{split}
&\yb_w \triangleq  (\A\G^{-1}\A^H)^{-1/2}\yb \\
&=  (\A\G^{-1}\A^H)^{-1/2}\A\R\bb + \wb_0, \end{split}
\label{29}
\end{equation}
where $\wb_0$ is a Gaussian random vector with zero mean and covariance matrix $\sigma^2 \I$. If we define a new measurement matrix 
\begin{equation}
\A_w \triangleq (\A\G^{-1}\A^H)^{-1/2}\A,
\end{equation}
then the RDD and RDDF detectors \ye{can be applied by replacing} $\A$ with $\A_w$ and $\y$ with $\y_w$ in (\ref{support}), (\ref{RD-MUD-sign}), (\ref{support_OMP}) and (\ref{symbol_OMP}). While whitening the noise, the whitening transform also distorts the signal component. \yx{As we demonstrate via numerical examples in Section \ref{eg:prew}, noise whitening is beneficial when the signature waveforms highly correlated.}

    \begin{figure}[h]
    \centering
        \includegraphics[width=.8\linewidth]{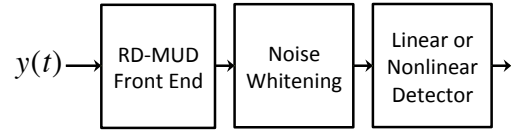}
        \caption{A MUD detector with noise whitening transform.}
        \label{Fig:prewhitening}
    \end{figure}

\subsubsection{Other RD-MUD linear detectors}\label{sec:RD_MUD_decorrelator}


\yx{By combining ideas developed in the context of compressed sensing and conventional linear MUD detection, we can develop alternative linear detectors in the reduced-dimension setting.}

\textit{Reduced-dimension MMSE (RD-MMSE) detector:} 
Similar to the MMSE detector of the conventional MUD, a linear detector based on the MMSE criterion can be derived for (\ref{restrict}). \yx{The RD-MMSE detector determines active users through the support recovery method of (\ref{support}), and then uses a linear transform $\M$ that minimizes $\mathbb{E}\{\|\bb_{\hat{\mathcal{I}}}-\M\y\|^2\}$ to estimate the symbols}. Here the expectation is with respect to the vector of transmitted symbols $\bb_{\hat{\mathcal{I}}}$ and the noise vector $\wb$. Following the approach for deriving the conventional MMSE detector \cite{verduMUD1998}, assuming that $\bb_{\hat{\mathcal{I}}}$ is uncorrelated with $\wb$ and $\mathbb{E}\{\bb_{\hat{\mathcal{I}}}\bb_{\hat{\mathcal{I}}}^H\} = \I$,
we obtain the linear transform for the reduced-dimension MMSE (RD-MMSE) detector as \yx{(see Appendix \ref{app:RD_MMSE} for details)}:
$\M = \R_{\hat{\mathcal{I}}}\A_{\hat{\mathcal{I}}}^H(\A_{\hat{\mathcal{I}}}\R_{\hat{\mathcal{I}}}^2\A_{\hat{\mathcal{I}}}^H + \sigma^2\A\G^{-1}\A^H)^{-1}. $
The symbols are then determined as:
\begin{equation}
\hat{b}_n =
\left\{\begin{array}{cc}
\sign(r_n\Re\{[\yx{\M}\y]_n\}), & n\in\hat{\mathcal{I}};\\
0, & n\notin\hat{\mathcal{I}}.
\end{array}\right. \label{MMSE_RDMUD}
\end{equation}
Similarly, we can modify RDDF by replacing symbol detection by (\ref{MMSE_RDMUD}) on the detected support $\mathcal{I}^{(k)}$ in each iteration.

\textit{Reduced-dimension least squares (RD-LS) detector:} 
In the reduced-dimensional model (\ref{restrict}), the matrix $\A_{\hat{\mathcal{I}}}\R_{\hat{\mathcal{I}}}$ introduces interference when we detect the symbols. Borrowing from the idea of conventional MUD decorrelator, we can  alleviate the effect of interference using the method of least-squares, and estimate the symbols by solving $\hat{\bb}_{\hat{\mathcal{I}}} = \arg\min_{\xb}\|\yb - \A_{\hat{\mathcal{I}}}\R_{\hat{\mathcal{I}}}\xb\|^2$. We call this the reduced-dimension least squares (RD-LS) detector. Since $\sign([\hat{\bb}_{\hat{\mathcal{I}}}]_n) = \sign([\R_{\hat{\mathcal{I}}}^2 \hat{\bb}_{\hat{\mathcal{I}}}]_n)$, RD-LS detects symbols by:
\begin{equation}
\hat{b}_n = \left\{
\begin{array}{cc}
\sign\left(r_n \Re \left[(\A_{\hat{\mathcal{I}}}^H \A_{\hat{\mathcal{I}}})^{-1}\A_{\hat{\mathcal{I}}}^H \yb\right]_n \right),& n \in \hat{\mathcal{I}};\\
0, & n\notin \hat{\mathcal{I}}.
\end{array}\right.
\label{LS}
\end{equation}
Similarly, we can modify RDDF by replacing symbol detection by (\ref{LS}) on detected support $\mathcal{I}^{(k)}$ in each iteration.

\subsubsection{\yx{Reduced-Dimension Maximum Likelihood (RD-ML)} detector}

The \yx{RD-ML} detector finds the active users and symbols by solving the following integer optimization problem:
\begin{equation}
\begin{split}
\max_{b_n\in\{-1, 0, 1\}}&
2\yb^H (\A\G^{-1}\A^H)^{-1}\A\R\bb \\
&\qquad - \bb^H \R\A^H(\A\G^{-1}\A^H)^{-1}\A\R\bb,
\end{split}
\label{ML_RDMUD}
\end{equation}
where $b_n = 0$ corresponds to the $n$th user being inactive.
Similar to the conventional maximum likelihood detector, 
\yx{the complexity-per-bit of the RD-ML is exponential in the number of users.  We therefore do not consider this algorithm further.}

\subsection{Choice of $\A$} \label{sec:A}


In Sections \ref{sec:RDD} and \ref{sec:algorithm_II} we have shown that both the RDD and  RDDF detectors are based on the inner products between $\yb$ and the columns of $\A$. \yx{Since $\yb$ consists \ag{of} $\ab_n$ corresponding to the active users \ag{plus noise},} 
intuitively, for RDD and RDDF to work well, the inner products between columns of $\A$, or its coherence as defined in (\ref{def_coherence}), should be small. 
%
Several commonly used random matrices in compressed sensing that have small coherence with high probability are:
\begin{itemize}
    \item[(1)] Gaussian random matrices, where entries $a_{nm}$ are independent and identically distributed (i.i.d.) with a zero mean and unit variance Gaussian distribution, with columns normalized to have unit norm;
    \item[(2)] Randomly sampled rows of a unitary matrix. For instance, the random partial discrete Fourier transform (DFT) matrix, which is formed by randomly selecting rows of a DFT matrix $\vect{F}$: $[\vect{F}]_{nm} = e^{i\frac{2\pi}{N}nm}$ and normalizing the columns of the sub-matrix.
    \item[(3)]  \yaor{Kerdock codes \cite{CalderbankCameronKantor1996}: these codes have dimension restricted to $M\times M^2$, where $M = 2^{m+1}$ with $m$ an odd integer greater than or equal to 3. They have very good coherence properties, with $\mu = 1/\sqrt{M}$ which meets the Welch lower bound on coherence. The Welch bound imposes a general lower bound on the coherence of any $M\times N$ matrix $\A$ \cite{FornasierRauhut2010} leading to $\mu \gtrapprox M^{-1/2}$, when $N$ is large relative to $M$ and $N$ is much larger than 1.}
\end{itemize}

Among these three possible matrix choices,  the random partial DFT matrix has some important properties that simplify closed-form analysis in some cases. In practice, if we choose the number of correlators equal to the number of users, i.e. $M = N$, then there is no dimension reduction, and the performance of RD-MUD should equal that of the MF-bank.
When $M = N$, the random partial DFT matrix becomes the DFT matrix with the property that $\A^H\A = \I$, i.e, $\ab_n^H \ab_m = \delta_{nm}$. \yx{In this case, the set of statistics $\{\ab_n^H\yb\}$ that RDD and RDDF are based on} has the same distribution as the decorrelator output. To see this, write $\ab_n^H \yb = \ab_n^H\left(\sum_{m=1}^N \ab_m r_m b_m\right) + \ab_n^H \wb = r_n b_n + \ab_n^H \wb$, where $\ab_n^H \wb$ is a Gaussian random variable with zero mean and covariance $\sigma^2 \ab_n^H \A\G^{-1}\A^H \ab_m = [\G^{-1}]_{nm}$. 
In contrast, the Gaussian random matrix does not have this property. 
\ye{Therefore, in this setting,} the performance of RD-MUD using a Gaussian random matrix $\A$ is worse than that using the random partial DFT matrix. This is also validated in our numerical results in Section \ref{sec:A_comp}. We will also see in Section \ref{sec:A_comp} that Kerdock codes outperform both random partial DFT and Gaussian random matrices for a large number of users. This is due to their good coherence properties. However, as discussed above, Kerdock codes have restricted dimensions and are thus less flexible for system design.

\section{Performance of RD-MUD}\label{sec:RD-MUD-performance}

We now study the performance of RD-MUD with the RDD and RDDF detectors. \yx{We begin by considering the case of a single active user without noise as a motivating example}. 

\subsection{Single Active User}\label{sec:single_user_det}

When there is only one active user in the absence of noise, the RDD detector can detect the correct active user and symbol by using only \textit{two} correlators, if \yx{every two} columns of $\A$ are linearly independent. Later we will also show this is a corollary (Corollary \ref{corollary_noise_free}) of the more general Theorem \ref{thm_noisy}. 

Assume there is no noise and only one user with index $n_0$ is active. In this case, $y(t) = r_{n_0}b_{n_0}s_{n_0}(t)$ and $K = 1$. 
In RD-MUD, with two correlators, the RDD detector determines the active user by finding
\begin{equation}
\hat{n}_0 = \arg\max_{n = 1,\cdots, N} |a_{1n}\la h_1(t), y(t) \ra + a_{2n}\la h_2(t), y(t) \ra|. \label{eg_RD_MUD}
\end{equation}
From the Cauchy-Schwarz inequality,
\begin{equation}
\begin{split}
 &|a_{1n}\la h_1(t), y(t) \ra + a_{2n}\la h_2(t), y(t) \ra|^2 \\
 &\leq (a_{1n}^2+a_{2n}^2)
 \left[
 \la h_1(t), y(t)\ra^2 +
 \la h_2(t), y(t)\ra^2
 \right],
 \end{split}
\end{equation}
with equality if and only if $a_{mn}=c\la h_m(t), y(t) \ra = c a_{mn_0} r_{n_0} b_{n_0} = c(n_0) a_{mn_0}$ for both $m = 1, 2$ with some constant $c(n_0)$. If \yx{every two} columns of $\A$ are linearly independent, we cannot have two indices $n$ such that $a_{mn} = c(n_0) a_{mn_0}$ for $m = 1, 2$.  Also recall that the columns of $\A$ are normalized, $a_{1n}^2 + a_{2n}^2 = \|\ab_n\|^2 = 1$. Therefore, the maximum is achieved only for $n = n_0$ and $c(n_0) = 1$, which detects the correct active user.
The detected symbol is also correct, since
\begin{equation}
\begin{split}
\hat{b}_{n_0} &= \sign(r_{n_0} [a_{1{n_0}}\la y(t), h_1(t) \ra + a_{2{n_0}}\la y(t), h_2(t)\ra] )\\ &= \sign(r_{n_0}^2 b_{n_0}[a_{1{n_0}}^2 + a_{2{n_0}}^2] ) = b_{n_0}.\end{split}
\end{equation}

\subsection{Noise Amplification of Subspace Projection}\label{noise_amplify}
    
The projection onto the detection subspace amplifies the variance of noise. \yx{When the RDD and RDDF detectors detect the $n$th user, they are affected by a noise component $\ab_n^H\wb$. Consider the special case with orthogonal signature waveforms, i.e. $\G = \I$, and $\A$ chosen as the random partial DFT matrix. In this case, the noise variance is given by $\sigma^2  \ab_n^H\A\A^H\ab_n = \sigma^2(N/M)$, so that it is amplified by a factor $N/M\geq 1$.}
\yx{In general, with subspace projection, the noise variance is amplified by a factor of $\ab_n^H \A\G^{-1}\A^H \ab_n$ \cite{CastroEldar2011}.
Below we  capture this noise amplification more precisely by relating the noise variance to the performance of the RD-MUD detectors.}

\subsection{Coherence Based Performance Guarantee}\label{coh_based}

In this section, we present conditions under which the RDD and RDDF detectors can successfully recover active users and their symbols. The conditions depend on $\A$ through its coherence and are parameterized by the crosscorrelations of the signature waveforms through the properties of the matrix $\G$. Our performance measure is the \ag{probability-of-symbol-error}, which is defined as the \ye{probability} that the set of active users is detected incorrectly, \textit{or} any of their symbols are detected incorrectly:
\begin{equation}
P_{e} = 
\mathbb{P}\{\hat{\mathcal{I}}\neq \mathcal{I}\}+\mathbb{P}\{\{\hat{\mathcal{I}}= \mathcal{I}\}\cap \{\hat{\bb}\neq\bb\}\}.\label{Pe}
\end{equation}
We will show \ye{in the proof of Theorem \ref{thm_noisy}} that the second term of (\ref{Pe}) is dominated by the first term when (\ref{support}) and (\ref{support_OMP}) are used for active user detection. 
%
%
Define the largest and smallest channel gains as
\begin{equation}
|r_{\max}|\triangleq \max_{n}  |r_n|,\quad |r_{\min}|\triangleq \min_{n} |r_n|. \label{gain_def}
\end{equation}
\yao{Also define the $k$th largest channel gain as $|r^{(k)}|$. Hence, $|r_{\max}| = |r^{(1)}|$ and $|r_{\min}| = |r^{(K)}|$.}
Our main result is the following theorem:
\begin{thm}\label{thm_noisy}
Let $\bb \in \mathbb{R}^{N\times 1}$ be an unknown deterministic symbol vector, $b_n \in \{-1, 1\}$, $n\in\mathcal{I}$, and $b_n = 0$, $n\in\mathcal{I}^c$, $n = 1, \cdots, N$. Denote the RD-MUD front-end output by $\y = \A \R \bb + \wb$, where $\A \in \mathbb{C}^{M\times N}$ and $\G \in \mathbb{R}^{N \times N}$ are known, $\wb$ is a Gaussian random vector with zero mean and covariance $\sigma^2 \A\G^{-1}\A^H$ and $\R = \mbox{diag}\{r_1, \cdots, r_N\}$. Let \begin{equation}
\tau \triangleq \sigma \sqrt{2(1+\alpha)\log N} \cdot \sqrt{\lambda_{\max}(\G^{-1})}\cdot \sqrt{\max_n (\ab_n^H \A\A^H \ab_n)}, \label{def_tau}
\end{equation}
for a given constant $\alpha >0$. 
\begin{enumerate}
\item Assume that the number of active users $K$ is known. If the coherence (\ref{def_coherence}) of $\A$ satisfies the following condition: 
\begin{equation}
|r_{\min}| - (2K - 1)\mu |r_{\max}| \geq 2 \tau,\label{cond_thresholding}
\end{equation}
for some constant $\alpha >0$, then the probability-of-symbol-error (\ref{Pe}) for the 
RDD detector is upper bounded by:
\begin{equation}
P_e \leq N^{-\alpha} [\pi(1+\alpha)\log N]^{-1/2}.\label{high_prob_noisy}
\end{equation}
\item \yao{Assume $K_0$ is an upper bound for the number of active users.  If the coherence (\ref{def_coherence}) of $\A$ satisfies (\ref{cond_thresholding}) for $K = K_0$, and we choose a threshold $\xi > 0$ that satisfies
\begin{equation}
K_0 \mu |r_{\max}| + \tau < \xi  < |r_{\min}| - (K_0 - 1)\mu |r_{\max}| - \tau,
\end{equation}
then the probability-of-symbol-error (\ref{Pe}) for the 
RDDt detector is upper bounded by the right-hand-side of (\ref{high_prob_noisy}).}
\item Assume that the number of active users $K$ is known. If the coherence (\ref{def_coherence}) of $\A$  satisfies the following condition:
\begin{equation}
|r_{\min}| - (2K - 1)\mu |r_{\min}| \geq 2 \tau,\label{cond_OMP}
\end{equation}
for some constant $\alpha >0$, then the probability-of-symbol-error (\ref{Pe}) for the RDDF detector is upper bounded by the right-hand-side of (\ref{high_prob_noisy}).
\item \yao{If the coherence (\ref{def_coherence}) of $\A$ satisfies (\ref{cond_OMP}) and we choose a threshold $\epsilon >0$ such that
\begin{equation}
\tau <\epsilon < \min_{k=1}^K \{|r^{(k)}|[1 - (K-k)\mu] - \tau\}, 
\end{equation}
then  the probability-of-symbol-error (\ref{Pe}) for the RDDFt detector is upper bounded by the right-hand-side of (\ref{high_prob_noisy}).}
\end{enumerate}
\end{thm}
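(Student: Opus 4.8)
The plan is to deduce all four parts from a single high-probability event controlling the noise, after which the analysis splits into a one-shot argument for the RDD/RDDt detectors (parts 1--2) and an inductive matching-pursuit argument for the RDDF/RDDFt detectors (parts 3--4). For each column I would expand the basic decision statistic as $\Re[\ab_n^H\y]=r_nb_n+\sum_{\ell\in\iset,\ell\neq n}\Re[\ab_n^H\ab_\ell]r_\ell b_\ell+\Re[\ab_n^H\wb]$ when $n\in\iset$, and as $\sum_{\ell\in\iset}\Re[\ab_n^H\ab_\ell]r_\ell b_\ell+\Re[\ab_n^H\wb]$ when $n\in\iset^c$, using $\|\ab_n\|=1$. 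By the definition of coherence each cross term is at most $\mu|r_{\max}|$, so an active statistic has magnitude at least $|r_{\min}|-(K-1)\mu|r_{\max}|-|\Re[\ab_n^H\wb]|$ and an inactive one at most $K\mu|r_{\max}|+|\Re[\ab_n^H\wb]|$. Writing $\wb=\A\gb$ with $\gb$ the real Gaussian vector of projections $\langle\hat s_n(t),w(t)\rangle$ (covariance $\sigma^2\G^{-1}$), each $\Re[\ab_n^H\wb]$ is a real zero-mean Gaussian with variance at most $\sigma^2\lambda_{\max}(\G^{-1})\max_n(\ab_n^H\A\A^H\ab_n)=\tau^2/[2(1+\alpha)\log N]$. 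Defining $\mathcal{E}=\{\max_{1\leq n\leq N}|\Re[\ab_n^H\wb]|\leq\tau\}$, the Gaussian tail bound $\mathbb{P}(|Z|>t)\leq\sqrt{2/\pi}\,t^{-1}e^{-t^2/2}$ at $t=\sqrt{2(1+\alpha)\log N}$ together with a union bound over the $N$ users gives $\mathbb{P}(\mathcal{E}^c)\leq N^{-\alpha}[\pi(1+\alpha)\log N]^{-1/2}$, exactly the right-hand side of (\ref{high_prob_noisy}). It then suffices to prove that on $\mathcal{E}$ each detector returns $\ihat=\iset$ and $\hat\bb=\bb$; both events in (\ref{Pe}) then vanish on $\mathcal{E}$, so $P_e\leq\mathbb{P}(\mathcal{E}^c)$, which also shows the second term of (\ref{Pe}) is governed by the same event as the first.

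For part 1 (RDD, $K$ known), on $\mathcal{E}$ the smallest active statistic is at least $|r_{\min}|-(K-1)\mu|r_{\max}|-\tau$ and the largest inactive one at most $K\mu|r_{\max}|+\tau$; these separate precisely when $|r_{\min}|-(2K-1)\mu|r_{\max}|>2\tau$, which is (\ref{cond_thresholding}), so the $K$ largest $|\Re[\ab_n^H\y]|$ sit at the active indices and $\ihat=\iset$. Since $r_n\Re[\ab_n^H\y]=r_n^2b_n+r_n(\text{interference}+\text{noise})$ and (\ref{cond_thresholding}) forces $|r_n|>(K-1)\mu|r_{\max}|+\tau$, the sign step returns $b_n$, giving $\hat\bb=\bb$. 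For part 2 (RDDt, $K\leq K_0$) I would reuse these two bounds with $K_0$ in place of $K$: the window $K_0\mu|r_{\max}|+\tau<\xi<|r_{\min}|-(K_0-1)\mu|r_{\max}|-\tau$ is nonempty exactly under (\ref{cond_thresholding}) at $K=K_0$, and any $\xi$ inside it exceeds every inactive statistic and falls below every active one (using $K\leq K_0$), so thresholding recovers $\iset$ and the sign step recovers $\bb$ as before.

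For parts 3--4 I would induct on the iteration $k$, assuming that after $k-1$ steps the detector has chosen $k-1$ distinct active indices with correct symbols, so that $\vb^{(k-1)}=\A\R(\bb-\bb^{(k-1)})+\wb$ has signal support on the undetected active set $\iset_k$, $|\iset_k|=K-k+1$. With $\rho_k=\max_{\ell\in\iset_k}|r_\ell|$, removing any $k-1$ active users leaves $\rho_k\geq|r^{(k)}|\geq|r_{\min}|$. The largest undetected user $j^\ast$ has residual correlation at least $\rho_k[1-(K-k)\mu]-\tau$, whereas every already-detected or inactive index has correlation at most $(K-k+1)\mu\rho_k+\tau$; these separate when $\rho_k[1-(2K-2k+1)\mu]>2\tau$, which (\ref{cond_OMP}) implies for all $k$ since $2K-2k+1\leq2K-1$. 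Thus on $\mathcal{E}$ the argmax in (\ref{support_OMP}) is a \emph{new} active index $n_k$. Its symbol is correct by a contradiction: a wrong sign would force the signal $r_{n_k}b_{n_k}$ and the perturbation to cancel, so $|\Re[\ab_{n_k}^H\vb^{(k-1)}]|\leq(K-k)\mu\rho_k+\tau$, which is too small for $n_k$ to be the maximizer since $|\Re[\ab_{n_k}^H\vb^{(k-1)}]|\geq\rho_k[1-(K-k)\mu]-\tau$ and (\ref{cond_OMP}) gives $\rho_k[1-2(K-k)\mu]\geq2\tau$. The induction closes, so after $K$ steps $\ihat=\iset$ and $\hat\bb=\bb$ (part 3). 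For part 4 the same per-step bound $\rho_k[1-(K-k)\mu]-\tau\geq|r^{(k)}|[1-(K-k)\mu]-\tau>\epsilon$ keeps the residual-correlation $\ell_\infty$ statistic above $\epsilon$ while undetected users remain, while once all $K$ are removed $\vb^{(K)}=\wb$ forces every $|\Re[\ab_n^H\vb^{(K)}]|\leq\tau<\epsilon$ and halts the loop; these are precisely the two sides of $\tau<\epsilon<\min_k\{|r^{(k)}|[1-(K-k)\mu]-\tau\}$.

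The main obstacle is the inductive step for RDDF/RDDFt, because the residual at stage $k$ depends on all earlier data-dependent selections. Two points need care: first, that no already-detected index is ever reselected, which follows because its residual correlation obeys the same bound as an inactive index and so cannot be the argmax; and second, that $\rho_k\geq|r^{(k)}|$ holds for whichever $k-1$ users were removed, which is the source of the uniform $|r_{\min}|$ in (\ref{cond_OMP}) and of the $\min_k$ in the part-4 window. The genuinely delicate observation is that a wrong symbol requires destructive cancellation, so that the magnitude of the selected correlation collapses to the interference-plus-noise level; this is what lets (\ref{cond_OMP}) suffice even though the selected index need not be the largest-amplitude undetected user, since the argmax property alone then forces its detected sign to be right.
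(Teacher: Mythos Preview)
Your proposal is correct and follows essentially the same route as the paper: isolate a single high-probability noise event, then argue deterministically that on this event each detector succeeds, via the same one-shot separation bounds for RDD/RDDt and the same inductive residual argument (with the ``wrong sign forces cancellation'' contradiction) for RDDF/RDDFt. The only cosmetic differences are that you control $|\Re[\ab_n^H\wb]|$ rather than $|\ab_n^H\wb|$ and use a straight union bound where the paper invokes Sidak's lemma; both choices lead to the identical probability bound and, if anything, your real-part formulation is cleaner when $\A$ is complex.
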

\begin{proof}
See Appendix \ref{proof_thm_noisy}.
\end{proof}

The main idea of the proof is the following. We define an event 
\begin{equation}
\mathcal{G} = \left\{\max_{n=1}^N |\ab_n^H \w| < \tau\right\}  \label{event}
\end{equation}
for $\tau$ defined in (\ref{def_tau}), and prove that $\mathcal{G}$ occurs with high probability. This 
bounds the two-sided tail probability of the noise.  
Then we show that under (\ref{cond_thresholding}), whenever $\mathcal{G}$ occurs, the active users can be correctly detected.  On the other hand, we show that under a condition weaker than (\ref{cond_thresholding}), whenever $\mathcal{G}$ occurs, the user data symbols can be correctly detected. A similar but inductive approach is used to prove the performance guarantee for the RDDF detector.

A special case of Theorem \ref{thm_noisy} is when $\A\A^H =(N/M) \I$, $\max_n (\ab_n^H \A\A^H \ab_n) = N/M$ and $\G = \I$. This is true when $\A$ is the random partial DFT matrix and the signature waveforms are orthogonal. If we scale $\sigma^2$ by $M/N$, then the right hand sides of (\ref{cond_thresholding}) and (\ref{cond_OMP}) are identical to the corresponding quantities in Theorem 4 of \cite{Ben-HaimEldarElad2010}. Hence, \yx{in this case} Theorem \ref{thm_noisy} has the same conditions as those of Theorem 4 in \cite{Ben-HaimEldarElad2010}.  However, Theorem 4 in \cite{Ben-HaimEldarElad2010} \yx{only} guarantees detecting the correct sparsity pattern of $\bb$ (equivalently, the correct active users), whereas Theorem \ref{thm_noisy}  guarantees correct detection of not only the active users but their symbols as well. \ye{Theorem \ref{thm_noisy} is also applied to more general colored noise.}

\textit{Remarks:}\label{remarks}
\begin{enumerate}
\item[(1)] The term on the right hand side of  (\ref{cond_thresholding}) and (\ref{cond_OMP})  is bounded by
$1 \leq \max_n (\ab_n^H \A\A^H \ab_n) \leq 1+(N-1)\mu^2.$
\item[(2)] There is a noise phase-transition effect. Define the minimum signal-to-noise ratio (SNR) as  
\begin{equation}
\SNR_{\min} = \frac{|r_{\min}|^2}{\sigma^2 \lambda_{\max}(\G^{-1})},\label{SNR_min}
\end{equation}
where $\lambda_{\max}(\G^{-1})$ captures the noise amplification effect in the subspace projection due to nonorthogonal signature waveforms. Conditions (\ref{cond_thresholding}) and (\ref{cond_OMP}) suggest that for the RDD and RDDF detectors to have $P_e$ as small as (\ref{high_prob_noisy}), we need \yx{at least}
\begin{equation}
\SNR_{\min} > 8\log N.\label{min_SNR_requirement}
\end{equation} 
\yx{This means that} if the minimum SNR is not sufficiently high, then these algorithms cannot attain small probability-of-symbol-error. We illustrate this effect via numerical examples in Section \ref{eg:noise_floor} (a similar effect can be observed in standard MUD detectors). 
\item[(3)] In Theorem \ref{thm_noisy} the condition of having a small probability-of-symbol-error for the RDDF detector is weaker than for the RDD detector. Intuitively, the iterative approach of decision feedback removes the effect of \yx{strong users} iteratively, which helps the detection of weaker users. 
\end{enumerate}

\subsection{Bounding probability-of-symbol-error of RDD and RDDF}

Theorem \ref{thm_noisy} provides a condition on how small $\mu$ has to be to achieve a small probability-of-symbol-error. \yx{We can eliminate the constant $\alpha$ and rewrite} Theorem \ref{thm_noisy} in an equivalent form that gives explicit error bounds for the RDD and RDDF detectors. 
Define
\begin{equation}
\begin{split}
&\beta_1 \triangleq \frac{[1-(2K-1)\mu |r_{\max}|/|r_{\min}|]^2}{\max_n (\ab_n^H \A\A^H \ab_n)},\\
&\beta_2 \triangleq \frac{[1-(2K-1)\mu]^2}{\max_n (\ab_n^H \A\A^H \ab_n)}. 
\end{split}\label{new_beta}
\end{equation}
For the RDD detector, we have already implicitly assumed that $1-(2K-1)\mu |r_{\max}|/|r_{\min}| \geq 0$, since the right hand side of (\ref{cond_thresholding}) in Theorem \ref{thm_noisy} is non-negative. For the same reason, for the RDDF detector, we have assumed that $1-(2K-1)\mu > 0$. By \yx{Remark (1)} and (\ref{new_beta}), $\beta_1 \leq 1$ and $\beta_2 \leq 1$. We can prove the following corollary from Theorem \ref{thm_noisy} (see \cite{Xie2011PhD} for details):
\begin{corollary}
Under the setting of Theorem \ref{thm_noisy}, with the definitions (\ref{SNR_min}) and (\ref{new_beta}), the probability-of-symbol-error for the RDD detector is upper-bounded by
\begin{equation}
P_{e, {\rm RDD}} \leq  \frac{2 N}{\sqrt{\pi}} 
\left[\frac{\SNR_{\min}}{2} \cdot \beta_1 \right]^{-1/2}
e^{-\frac{1}{4}\frac{\SNR_{\min}}{2} \cdot \beta_1}, \label{form_thresholding}
\end{equation}
with $1-(2K-1)\mu |r_{\max}|/|r_{\min}| \geq 0$,
and the probability-of-symbol-error for the RDDF detector is upper bounded by
\begin{equation}
P_{e, {\rm RDDF}} \leq  \frac{2 N}{\sqrt{\pi}} 
\left[\frac{\SNR_{\min}}{2} \cdot \beta_2 \right]^{-1/2}
\cdot
e^{-\frac{1}{4}\frac{\SNR_{\min}}{2} \beta_2 }, \label{form_OMP}
\end{equation}
with  $1-(2K-1)\mu > 0$. 
\end{corollary}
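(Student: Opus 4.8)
The plan is to \emph{invert} Theorem~\ref{thm_noisy}: rather than leaving the free parameter $\alpha$ in the statement, I choose it as large as the coherence hypothesis permits and substitute the resulting value into the bound (\ref{high_prob_noisy}). Consider first the RDD detector. The quantity $\tau$ of (\ref{def_tau}) is strictly increasing in $\alpha$, so the set of $\alpha>0$ for which (\ref{cond_thresholding}) holds is an interval $(0,\alpha^\star]$, where $\alpha^\star$ is the value making (\ref{cond_thresholding}) an equality. Under the ``setting of Theorem~\ref{thm_noisy}'' this interval is nonempty, hence $\alpha^\star>0$ and Theorem~\ref{thm_noisy} applies with $\alpha=\alpha^\star$.

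Next I solve for $\alpha^\star$. Squaring the equality version of (\ref{cond_thresholding}), dividing through by $|r_{\min}|^2$ and by $\max_n(\ab_n^H\A\A^H\ab_n)$, and then recognizing the definitions (\ref{SNR_min}) of $\SNR_{\min}$ and (\ref{new_beta}) of $\beta_1$, one gets
\[
\beta_1=\frac{8(1+\alpha^\star)\log N}{\SNR_{\min}},
\qquad\text{equivalently}\qquad
\alpha^\star=\frac{\SNR_{\min}\,\beta_1}{8\log N}-1 .
\]
(The requirement $\alpha^\star>0$ is exactly the minimum-SNR condition $\SNR_{\min}\beta_1>8\log N$; compare (\ref{min_SNR_requirement}).) Substituting $\alpha=\alpha^\star$ into (\ref{high_prob_noisy}) and using $N^{-\alpha^\star}=N\,N^{-\SNR_{\min}\beta_1/(8\log N)}=N\,e^{-\SNR_{\min}\beta_1/8}$ (natural logarithm) together with $(1+\alpha^\star)\log N=\SNR_{\min}\beta_1/8$ yields
\[
P_{e,\mathrm{RDD}}\le N\,e^{-\SNR_{\min}\beta_1/8}\,\bigl[\pi\,\SNR_{\min}\beta_1/8\bigr]^{-1/2};
\]
pulling out $\pi^{-1/2}$ and regrouping $1/8=\tfrac14\cdot\tfrac12$ (so that $[\SNR_{\min}\beta_1/8]^{-1/2}=2[\SNR_{\min}\beta_1/2]^{-1/2}$) turns this into exactly (\ref{form_thresholding}). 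The RDDF bound (\ref{form_OMP}) follows from the identical argument applied to (\ref{cond_OMP}) instead of (\ref{cond_thresholding}): the only change is that $|r_{\max}|$ is replaced by $|r_{\min}|$, so $\beta_1$ becomes $\beta_2$ and $\alpha^\star=\SNR_{\min}\beta_2/(8\log N)-1$.

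I do not expect any real obstacle; the corollary is a reparametrization of Theorem~\ref{thm_noisy}, not a new estimate. The only points that need care are (i) the monotonicity remark that justifies evaluating at the endpoint $\alpha=\alpha^\star$ of the admissible interval, (ii) tracking the constants $2$, $\sqrt2$, $\sqrt8$ when passing between $[\SNR_{\min}\beta/8]^{-1/2}$ and $[\SNR_{\min}\beta/2]^{-1/2}$, and (iii) the identity $N^{-a/\log N}=e^{-a}$, which produces the exponential form and implicitly fixes the base of the logarithm as $e$.
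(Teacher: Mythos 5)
Your proposal is correct and is exactly the route the paper intends: it introduces the corollary as obtained by ``eliminating the constant $\alpha$'' from Theorem~\ref{thm_noisy} (deferring details to \cite{Xie2011PhD}), which is precisely your step of setting (\ref{cond_thresholding}) (resp.\ (\ref{cond_OMP})) to equality, solving $8(1+\alpha^\star)\log N=\SNR_{\min}\beta_1$ (resp.\ $\beta_2$), and substituting into (\ref{high_prob_noisy}); your constant-tracking checks out and reproduces (\ref{form_thresholding}) and (\ref{form_OMP}) exactly. Your observation that the argument implicitly requires $\alpha^\star>0$, i.e.\ $\SNR_{\min}\beta>8\log N$, is consistent with the paper's own Remark (2) on the minimum-SNR condition (\ref{min_SNR_requirement}).
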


For the decorrelating detector in conventional MUD, a commonly used performance measure is the probability of error of each user \cite{LupasVerdu1989}, which is given by:
\begin{equation}
\mathbb{P}\{\hat{b}_n \neq b_n\} = Q\left(|r_n|/(\sigma\sqrt{[\G^{-1}]_{nn}})\right), \label{known}
\end{equation}
where $Q(x) = \int_x^\infty (1/\sqrt{2\pi}) e^{-z^2/2}dz$ is the Gaussian tail probability.  Using (\ref{known})  and the union bound, we obtain
\begin{align}
P_e  &= \mathbb{P}\{\hat{\bb}\neq \bb\}  \leq \sum_{n=1}^N \mathbb{P}\{\hat{b}_n \neq b_n\} 
 \leq  N Q\left(\sqrt{{\rm SNR}_{\min}}\right) \nonumber\\
& \leq
 \frac{N}{2\sqrt{\pi} }
\left[\frac{{\rm SNR}_{\min}}{2}\right]^{-1/2} e^{-\frac{{\rm SNR}_{\min}}{2}},
 \label{MF_decorrelator_exist}
\end{align} 
where we also used the fact that $|r_n|/[\sigma\sqrt{[\G^{-1}]_{nn}}] \geq   \sqrt{{\rm SNR}_{\min}}$ and $Q(x)$ is decreasing in $x$, as well as the bound \cite{verduMUD1998} 
$
Q(x) \leq 1/(x\sqrt{2\pi})e^{-x^2/2}. \label{Q_bound}
$ 
Since conventional MUD is not concerned with active user detection and the errors are due to symbol detection, it only makes sense to compare  (\ref{MF_decorrelator_exist})  to  (\ref{form_thresholding}) and (\ref{form_OMP}) when $K = N$ and $M = N$. Under this setting, $\beta_1 = 1$ and $\beta_2 = 1$, and the bounds on $P_e$ (\ref{form_thresholding}) of RDD and (\ref{form_OMP})  of RDDF are larger than the bound (\ref{MF_decorrelator_exist}) of the decorrelating detector. 
This is because when deriving bounds for symbol detection error in the proof of Theorem \ref{thm_noisy}, we consider the probability of (\ref{event}), which requires two-side tail-probability of a Gaussian random variable. In contrast, in conventional MUD, without active-user detection, only the one-sided tail probability of the Gaussian random variable $\mathbb{P}\{\wb > \tau\}$ is required \yaor{because we use binary modulation}. Nevertheless, obtaining a tighter bound for symbol detection error is not necessary in RD-MUD because when $K<N$, active user detection error dominates symbol detection error.

By letting the noise variance $\sigma^2$ go to zero in (\ref{form_thresholding}) and (\ref{form_OMP}) for the RDD and RDDF detectors, we can derive the following corollary from Theorem \ref{thm_noisy} (\ag{a proof of this corollary for the RDD detector has been given in Section \ref{sec:single_user_det}). }
%
\begin{corollary}\label{corollary_noise_free}
Under the setting of Theorem \ref{thm_noisy}, in the absence of noise, the RDD detector can correctly detect the active users and their symbols if $\mu < |r_{\min}|/[|r_{\max}|(2K-1)]$, and the RDDF detector can correctly detect the active users and their symbols if $\mu < 1/(2K-1)$.
In particular, if $K = 1$, with $M = 2$ correlators, $P_e = 0$ for the RDDF detector, and if furthermore $|r_{\max}| = |r_{\min}|$, $P_e = 0$ for the RDD detector (which has also been shown in Section \ref{sec:single_user_det}).
\end{corollary}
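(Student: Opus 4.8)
The plan is to read the corollary off Theorem~\ref{thm_noisy} as a zero-noise specialization, reusing the deterministic core of that theorem's proof. As the proof sketch indicates, the argument there first isolates the ``good'' event $\mathcal{G} = \{\max_n |\ab_n^H\w| < \tau\}$ and then shows \emph{deterministically} that on $\mathcal{G}$ the RDD (resp.\ RDDF) detector recovers both the active users and their symbols whenever the coherence condition (\ref{cond_thresholding}) (resp.\ (\ref{cond_OMP})) holds. The key observation is that this deterministic implication is really parametrized by an arbitrary positive level $t$: if $\max_n|\ab_n^H\w| < t$ and $|r_{\min}| - (2K-1)\mu|r_{\max}| \ge 2t$ (resp.\ $|r_{\min}| - (2K-1)\mu|r_{\min}| \ge 2t$), detection succeeds. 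In the noiseless model $\w = \vect{0}$, so $\max_n|\ab_n^H\w| = 0 < t$ for \emph{every} $t > 0$, and $\mathcal{G}$ holds with certainty.

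First I would handle the RDD detector. Since $\max_n|\ab_n^H\w| = 0$, a valid level $t>0$ satisfying $|r_{\min}| - (2K-1)\mu|r_{\max}| \ge 2t$ exists precisely when $|r_{\min}| - (2K-1)\mu|r_{\max}| > 0$, i.e.\ when $\mu < |r_{\min}|/[|r_{\max}|(2K-1)]$. Choosing such a $t$ and invoking the deterministic step of the proof of Theorem~\ref{thm_noisy} on the (certain) event $\mathcal{G}$ gives $P_e = 0$ for RDD. The same argument applied with $|r_{\min}| - (2K-1)\mu|r_{\min}| \ge 2t$ shows $P_e = 0$ for RDDF whenever $\mu < 1/(2K-1)$. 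Equivalently, one can simply let $\sigma^2 \to 0$ in the corollary bounds (\ref{form_thresholding}) and (\ref{form_OMP}): then $\SNR_{\min} \to \infty$ while $\beta_1$ and $\beta_2$ remain fixed and strictly positive under the stated coherence hypotheses, so both right-hand sides vanish and $P_e = 0$.

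For the special case $K = 1$, the RDDF condition reduces to $\mu < 1$. By Cauchy--Schwarz, $\mu = \max_{n\neq\ell}|\ab_n^H\ab_\ell| \le \|\ab_n\|\|\ab_\ell\| = 1$ with equality iff two columns of $\A$ are collinear; hence $\mu < 1$ is equivalent to every two columns of $\A$ being linearly independent, which can be arranged with only $M = 2$ rows. This yields $P_e = 0$ for RDDF with two correlators. If moreover $|r_{\max}| = |r_{\min}|$, the RDD condition $\mu < |r_{\min}|/[|r_{\max}|(2K-1)]$ likewise becomes $\mu < 1$, so the same $M=2$ construction gives $P_e = 0$ for RDD, recovering the direct Cauchy--Schwarz computation of Section~\ref{sec:single_user_det}.

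The corollary carries no genuine analytical difficulty --- it is a limiting specialization of Theorem~\ref{thm_noisy}. The only points needing care are bookkeeping: making the coherence inequalities \emph{strict} in the limit $t \to 0^+$ (so that an admissible positive level exists and $\mathcal{G}$ truly has probability one), and, in the $K = 1$ case, stating the Cauchy--Schwarz equality condition explicitly to justify that two correlators suffice.
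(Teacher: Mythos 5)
Your proposal is correct and matches the paper's own (very terse) derivation: the paper obtains the corollary precisely by letting $\sigma^2 \to 0$ in the bounds (\ref{form_thresholding}) and (\ref{form_OMP}), where $\SNR_{\min}\to\infty$ and $\beta_1,\beta_2$ stay strictly positive under the strict coherence inequalities, and it handles the $K=1$, $M=2$ case by the direct Cauchy--Schwarz computation of Section \ref{sec:single_user_det}, just as you do. Your additional deterministic reading (on $\mathcal{G}$ with an arbitrary level $t>0$ and $\w=\vect{0}$) is a harmless restatement of the same specialization, and your care about strictness of the inequalities in the limit is exactly the right bookkeeping.
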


\subsection{Lower Bound on the Number of Correlators}

Theorem \ref{thm_noisy} is stated for any matrix $\A$. By substitution of the expression for coherence of a given $\A$ in terms of its dimensions $M$ and $N$ into Theorem \ref{thm_noisy}, we can obtain a lower bound on the smallest number of correlators $M$ needed to achieve a certain probability-of-symbol-error. 
%
For \yx{example, the coherence of the random partial DFT matrix can be bounded in probability (easily provable} by the complex Hoeffding's inequality \cite{Hoeffding1963}):
\begin{lemma}\label{lemma_partial_DFT}
Let $\A \in \mathbb{C}^{M \times N}$ be a random partial DFT matrix. Then the coherence of $\A$ is bounded by
\begin{equation}
\mu < \left[4(2\log N + c)/M \right]^{1/2},
\end{equation}
with probability exceeding $1- 2e^{-c}$, for some constant $c>0$.
\end{lemma}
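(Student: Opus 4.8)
The plan is to control each off-diagonal entry of the Gram matrix $\A^H\A$ individually and then take a union bound over all pairs of columns. First I would fix notation: write the random partial DFT matrix as the column-normalized version of the $M\times N$ submatrix of the $N\times N$ DFT matrix $\vect{F}$, $[\vect F]_{km}=e^{i2\pi km/N}$, obtained by keeping the rows indexed by a random set $\Omega\subset\{0,\dots,N-1\}$ with $|\Omega|=M$. After normalization every column $\ab_n$ has entries of modulus $1/\sqrt M$, so for $n\neq\ell$,
\[
\ab_n^H\ab_\ell=\frac1M\sum_{k\in\Omega}e^{\,i2\pi k(\ell-n)/N}.
\]
Each summand has unit modulus, and since $1\le n,\ell\le N$ with $n\neq\ell$ the integer $\ell-n$ is a nonzero residue modulo $N$, so averaging the character $k\mapsto e^{i2\pi k(\ell-n)/N}$ over all $N$ rows yields $0$; hence each term, and therefore the whole normalized sum, has mean zero (in either the with-replacement or the without-replacement sampling model, by symmetry).

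Next I would invoke a complex Hoeffding bound for this normalized sum of independent, zero-mean, unit-modulus terms. Splitting into real and imaginary parts, $\Re[e^{i2\pi k(\ell-n)/N}]\in[-1,1]$ and $\Im[e^{i2\pi k(\ell-n)/N}]\in[-1,1]$, each of mean zero, the scalar Hoeffding inequality gives $\mathbb{P}(|\Re[\ab_n^H\ab_\ell]|\ge t/\sqrt2)\le 2e^{-Mt^2/4}$ and likewise for the imaginary part. Since $|\ab_n^H\ab_\ell|\ge t$ forces at least one of the two parts to exceed $t/\sqrt2$, a union bound gives $\mathbb{P}(|\ab_n^H\ab_\ell|\ge t)\le 4e^{-Mt^2/4}$. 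A further union bound over the $\binom N2$ unordered pairs yields $\mathbb{P}(\mu\ge t)\le 2N(N-1)e^{-Mt^2/4}<2N^2e^{-Mt^2/4}$. Finally, taking $t=[4(2\log N+c)/M]^{1/2}$ makes $Mt^2/4=2\log N+c$, hence $e^{-Mt^2/4}=N^{-2}e^{-c}$ and $\mathbb{P}(\mu\ge t)<2e^{-c}$, which is precisely the claim.

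The only delicate point is that when $\Omega$ is sampled without replacement the summands $e^{i2\pi k(\ell-n)/N}$, $k\in\Omega$, are dependent, so the scalar Hoeffding inequality does not apply verbatim. The standard remedy — which is exactly what makes this bound routine — is Hoeffding's own observation that a sum over a sample drawn without replacement from a finite population concentrates at least as sharply as the corresponding with-replacement sum; equivalently, one may work directly in the i.i.d.\ uniform-row model. Either way the tail estimates above carry over unchanged, and the remaining steps are elementary bookkeeping.
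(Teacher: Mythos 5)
Your proof is correct and follows exactly the route the paper intends: the paper gives no written proof of this lemma, only the remark that it follows from the complex Hoeffding inequality of \cite{Hoeffding1963}, and your argument (zero-mean unit-modulus character sums per column pair, Hoeffding on the real and imaginary parts with the $t/\sqrt{2}$ split, a union bound over the $\binom{N}{2}$ pairs, and Hoeffding's without-replacement domination to handle the dependent row sampling) reproduces the stated constant $4(2\log N+c)/M$ and failure probability $2e^{-c}$ precisely.
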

Lemma \ref{lemma_partial_DFT} \yx{together with Theorem \ref{thm_noisy} imply that for the partial DFT matrix to attain a small probability-of-symbol-error, the number of correlators needed by the RDD and RDDF detectors is on the order of $\log N$. }This is much smaller than that required by the conventional MUD using a MF-bank, which is on the order of $N$.


Corollary \ref{corollary_noise_free} together with the Welch bound imply that, for the RDD and RDDF detectors to have perfect detection, the number of correlators $M$ should be on the order of $(2K-1)^2$. In the compressed sensing literature, it is known that the bounds obtained using the coherence property of a matrix have a ``quadratic bottleneck'' \cite{FornasierRauhut2010}: the number of measurements is on the order of $K^2$. Nevertheless, the coherence property is easy to check for a given matrix, and it is a  convenient measure of the user interference level in the detection subspace  as we demonstrated in the proof of Theorem \ref{thm_noisy}.

\section{Numerical Examples}\label{sec:numerical_eg}

As an illustration of the performance of RD-MUD, we present some numerical examples.  \ye{We first generate $10^5$ partial random DFT matrices and choose the matrix that has the smallest coherence as $\A$. Then using the fixed $\A$, we obtain results from $5\times 10^5$ Monte Carlo trials.} For each trial, we generate a Gaussian random noise vector and random bits:  $b_n \in  \{-1, 1\}$, $n\in \mathcal{I}$ with probability 1/2. In this setting, the conventional decorrelating detector has $P_e$ equal to that of the RDD when $M = N$. 

\subsubsection{$P_e$ vs. $M$, as $N$ increases}

Fig. \ref{Fig:comp} shows the $P_e$ of the RDD, RDDF, RDDt and RDDFt detectors as a function of $M$, for fixed $K = 2$, and different values of $N$.
The amplitudes $r_n = 1$ for all $n$, the noise variance is $\sigma^2 = 0.005$, and $\G = \I$, which corresponds to $\SNR _{\min}= 23$dB.  \yaor{For each combination of $N$ and $M$, we numerically search to find the best values for parameters $\xi$ and $\epsilon$. The values of $\xi$ range from 0.78 to 0.92, increase for larger $N$ and decrease for larger $M$ for the RDDt detector.  The values of $\epsilon$ range from 0.50 to 0.80, increase for larger $N$ and decrease for larger $M$ for the RDDFt detector.} 
The RDD and RDDF detectors can achieve small $P_e$ for $M$ much smaller than $N$; the RDDt and RDDFt have some sacrifice in performance due to their lack of knowledge of $K$. This degradation becomes more pronounced for larger values of $N$. 

\begin{figure}[h!]
\begin{center}
\subfloat[RDD and RDDt]{\label{Fig:grouse}
\includegraphics[width = .8\linewidth]{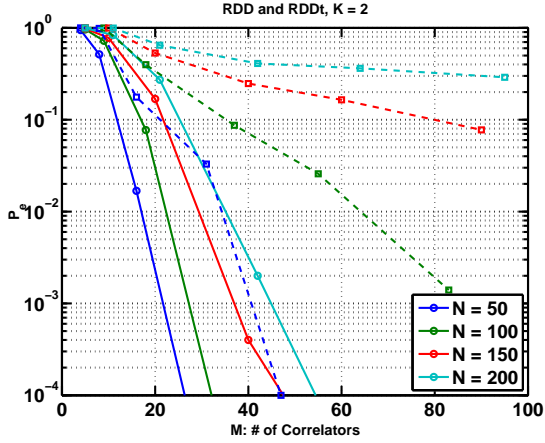}}
\qquad
\subfloat[RDDF and RDDFt]
{\label{Fig:petrels}
\includegraphics[width = .8\linewidth]{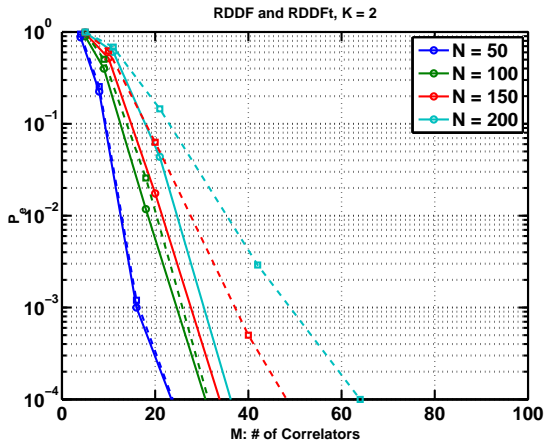}}
\end{center}
\caption{$P_e$ versus $M$  for  $K = 2$ and different $N$. The amplitudes $r_n = 1$ for all $n$, the noise variance is $\sigma^2 = 0.005$, and $\G = \I$, which corresponds to $\SNR _{\min}= 23$dB. (a) RDD and RDDt, where the solid lines correspond to RDD and the dashed lines correspond to RDDt; and (b) RDDF and RDDFt, where the solid lines correspond to RDDF and the dashed lines correspond to RDDFt. }
\label{Fig:comp}
\end{figure}

\subsubsection{$P_e$ vs $M$, as $K$ increases}

Fig. \ref{Fig:MvsK} demonstrates the $P_e$ of the RDD, RDDF, RDDt, RDDFt detectors as a function of $M$, for a fixed $N = 100$, and different values of $K$.  \yaor{For each combination of $M$ and $K$, we numerically search to find the best values for the parameters $\xi$ and $\epsilon$. Here $\xi$ ranges from 0.68 to 0.80 and $\epsilon$ ranges from 0.32 to 0.70.}  The amplitudes $r_n = 1$ for all $n$, the noise variance is $\sigma^2 = 0.005$, and $\G = \I$, which corresponds to $\SNR _{\min}= 23$dB. 
Clearly, the number of correlators needed to obtain small $P_e$ increases as $K$ increases.

\begin{figure}[h]
\begin{center}
\subfloat[RDD and RDDt]{\label{Fig:MvsK}
\includegraphics[width = .8\linewidth]{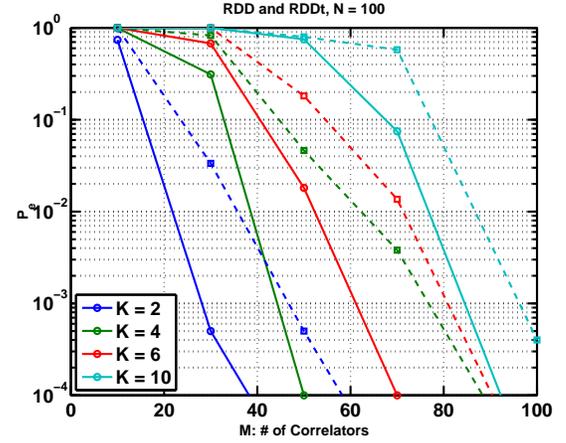}}
\qquad
\subfloat[RDDF and RDDFt]
{\label{Fig:Pe_OMP}
\includegraphics[width = .8\linewidth]{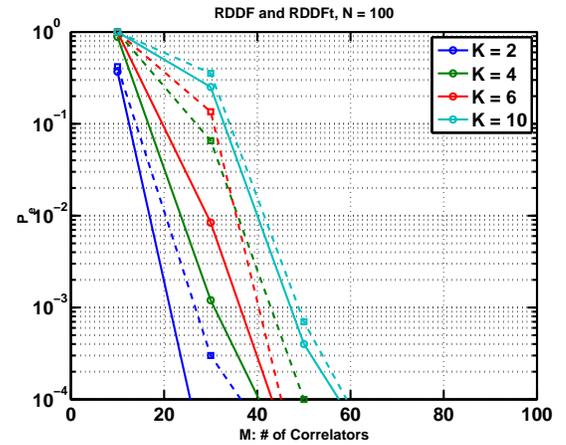}}
\end{center}
\caption{$P_e$ versus $M$ for $N = 100$ and different $K$. The amplitudes $r_n = 1$ for all $n$, the noise variance is $\sigma^2 = 0.005$, and $\G = \I$. (a) RDD and RDDt, where the solid lines correspond to RDD and the dashed lines correspond to RDDt; and (b) RDDF and RDDFt, where the solid lines correspond to RDDF and the dashed lines correspond to RDDFt.}
\end{figure}

\subsubsection{Comparison of random matrices $\A$}\label{sec:A_comp}

\yaor{We compare the $P_e$ of the RDD and RDDF detectors when the Gaussian random matrices, the random partial DFT matrices or the Kerdock codes are use for $\A$.} 
In Fig. \ref{Fig:DFT_vs_Gaussian}, the $P_e$ of the Gaussian random matrix converges to a value much higher than that of the partial DFT matrix, when $M$ increases to $N$. In this example, $N = 100$, $K = 6$, the amplitudes $r_n = 1$ for all $n$, the noise variance is $\sigma^2 = 0.005$, and $\G = \I$. \yaor{In Fig. \ref{Fig:kerdock_comp}, Kerdock codes outperform both the partial DFT and the Gaussian random matrices because of their good coherence properties. This behavior can be explained as follows.  For larger $N$ and relatively small $M$, it becomes harder  to select a partial DFT matrix with small coherence by random search, whereas the Kerdock codes can be efficiently constructed and they obtain the Welch lower bound on coherence by design. For fixed $N$ and $M$, Kerdock codes can support a large number of active users, as demonstrated in Fig. \ref{Fig:Kerdock}. In this example, the coherence of the Kerdock code is $\mu = 0.0312$, which is much smaller than the coherence $\mu = 0.0480$ obtained by choosing from $10^5$ random partial DFT matrices.} Kerdock codes are tight frames \cite{Christensen2003,CalderbankCameronKantor1996} meaning that $\G = N/M \I$  so that no pre-whitening is needed. 

\begin{figure}[h!]
\begin{center}
\subfloat[DFT vs. Gaussian]{\label{Fig:DFT_vs_Gaussian}
\includegraphics[width = .8\linewidth]{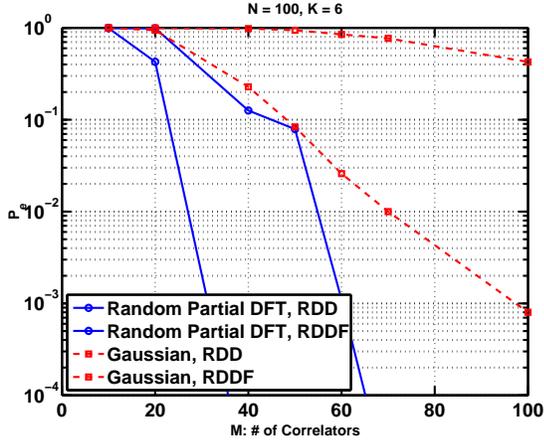}}
\qquad
\subfloat[DFT vs. Gaussian vs. Kerdock]
{\label{Fig:kerdock_comp}
\includegraphics[width = .8\linewidth]{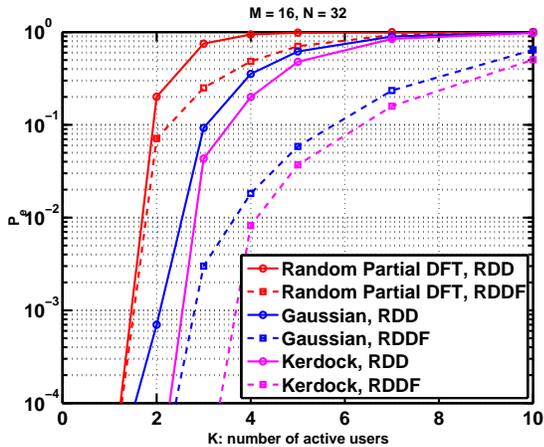}}
\end{center}
\qquad
\subfloat{}
\caption{(a): $P_e$ versus $M$ of the RDD and RDDF detectors using random partial DFT versus Gaussian random matrices for $N = 100$ and $K = 6$ (Kerdock codes require dimensions of $M = 2^m$ for $m = 4, 6, \ldots$ and hence are not presented here). (b) $P_e$ versus $K$ of the RDD and RDDF detectors using Gaussian random matrices, random partial DFT matrices, and Kerdock codes of size 16 by 256 (arbitrarily select 32 columns  for 32 users), for $N = 32$ and $M = 16$. In both examples, the amplitudes $r_n = 1$ for all $n$, the noise variance is $\sigma^2 = 0.005$, and $\G = \I$.}
\label{Fig:comp}
\end{figure}

\begin{figure}[h]
\centering{
\includegraphics[width = .8\linewidth]{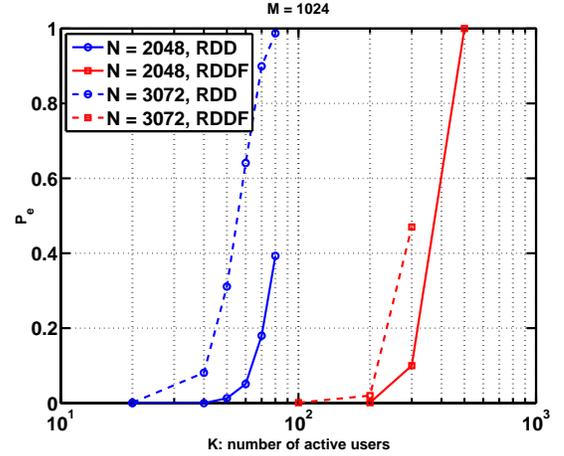}
}
\vspace{0.1in}
\caption{Performance of RDD and RDDF detectors, when using Kerdock codes for $\A$ with $M = 1024$, $P_e$ versus $K$ for various $N$ when  amplitudes $r_n$ uniformly random in [1, 1.5], $\G = \I$ and $\sigma^2 =0.005$. }
\label{Fig:Kerdock}
\end{figure}

\subsubsection{$P_e$ vs. $M$, as $\SNR$ changes}\label{eg:noise_floor}

Consider a case where $\SNR_{\min}$ changes by fixing $\G = \I$, $r_n = 1$ for all $n$, and varying $\sigma^2$. For comparison, we also consider the conventional decorrelating detector, which corresponds to the RDD detector with $M = N$. Assume $N = 100$ and $K = 2$. Note that there is a noise phase-transition effect in Fig. \ref{Fig:MvsK}, which is discussed in the Remarks of Section \ref{coh_based}. 

\begin{figure}[h]
\begin{center}
\subfloat[RDD]{
\includegraphics[width = .8\linewidth]{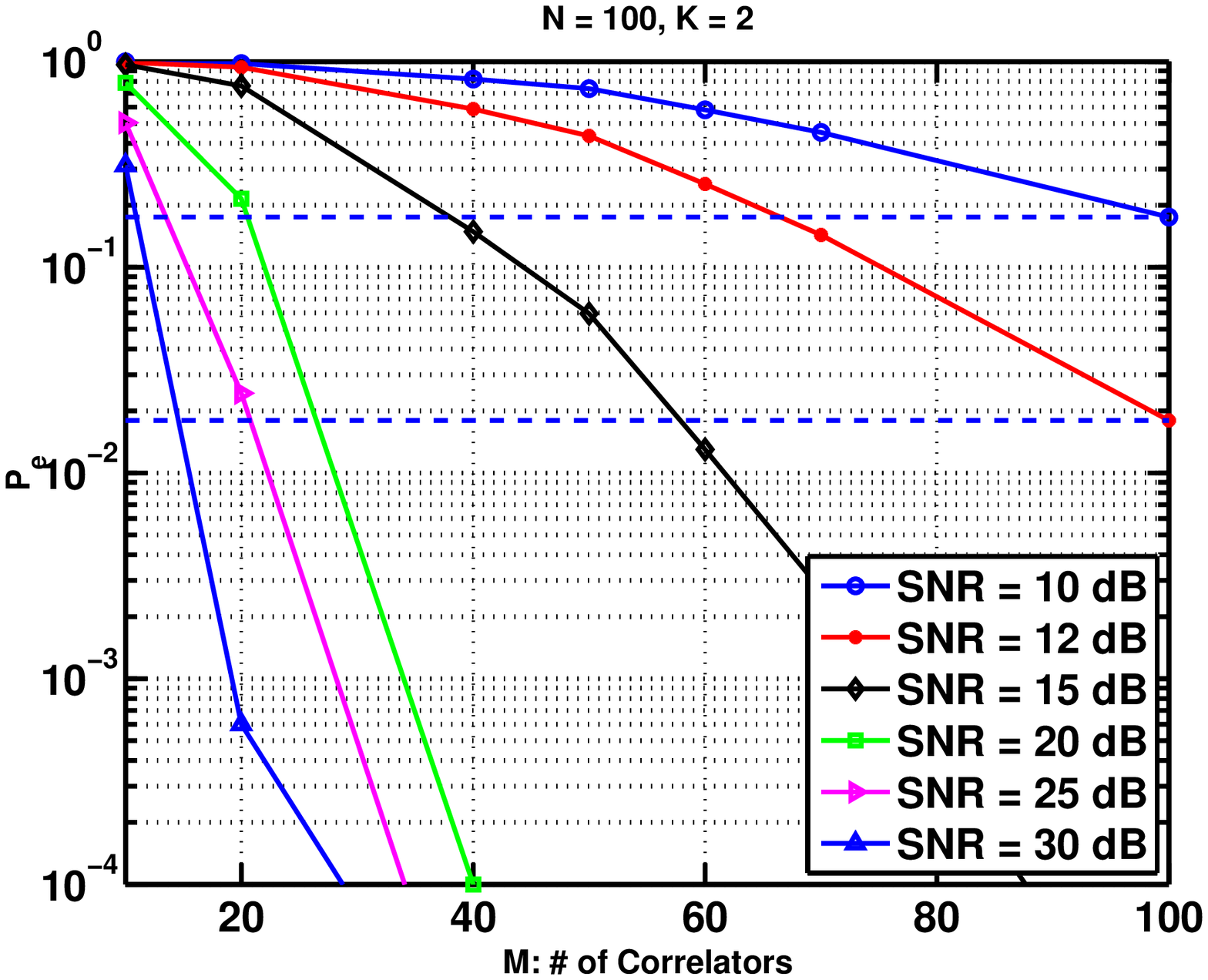}}
\qquad
\subfloat[RDDF]
{
\includegraphics[width = .8\linewidth]{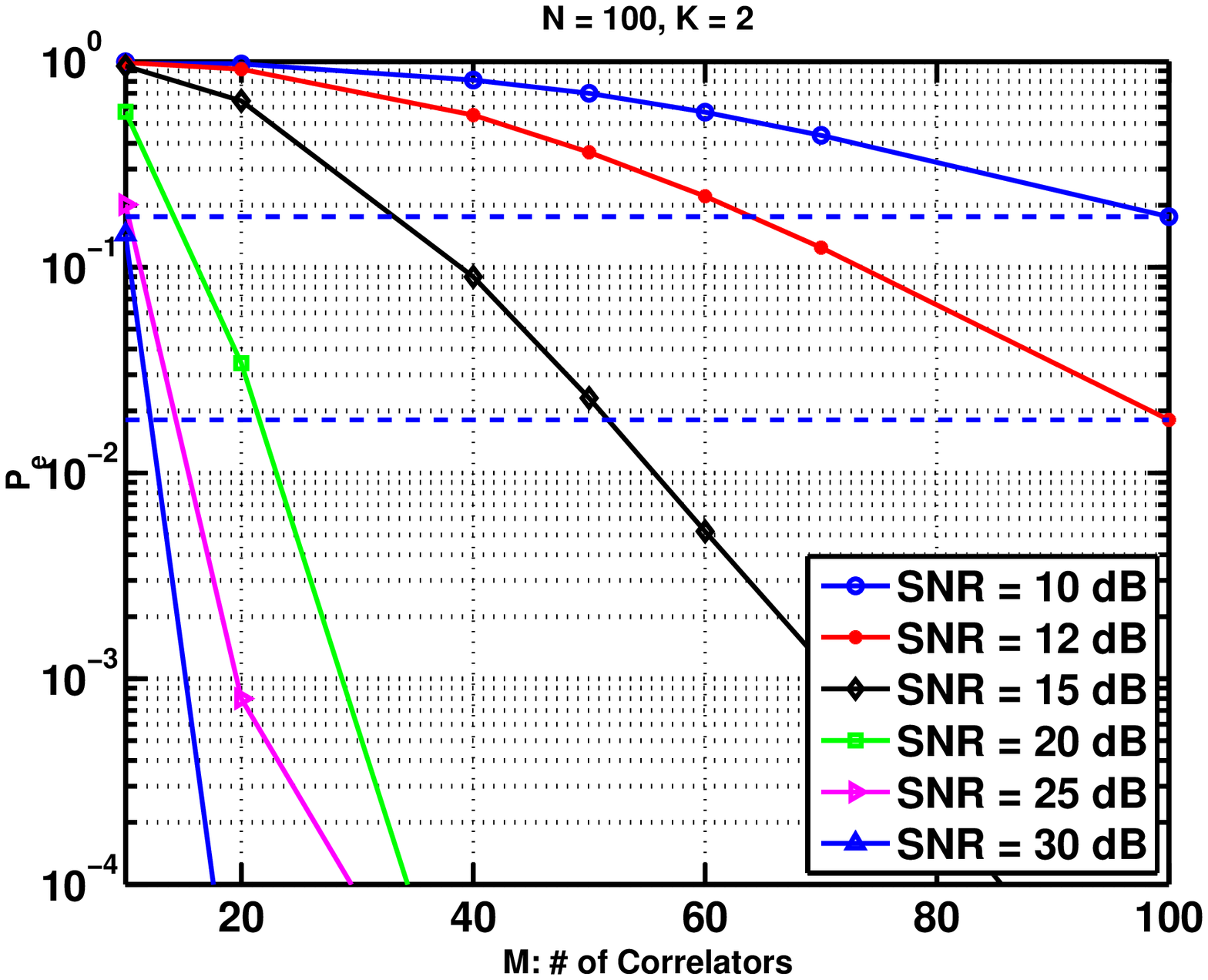}}
\end{center}
\caption{Performance of  RDD and RDDF detectors when $r_n = 1$ for all $n$, $\G=\I$ and various $\sigma^2$, where we denote $\SNR =10\log_{10} (r_n^2/\sigma^2)$dB. The dashed lines show $P_e$ for the conventional decorrelating detectors at the corresponding $\SNR$. 
}
\label{Fig:Pe_SNR}
\end{figure}

\subsubsection{Near-far problem, $\G = \I$}\label{eg:noise_floor}

To illustrate the performance of the RDD and RDDF detectors in the presence of the near-far problem, we choose $r_n$ uniformly random from $[1, 1.5]$ for active users. Assume $N = 100$, $K = 2$, $\sigma^2 = 0.005$. In Fig. \ref{Fig:near_far}, RDDF significantly outperforms RDD.

    \begin{figure}[h]
    \centering
        \includegraphics[width=.8\linewidth]{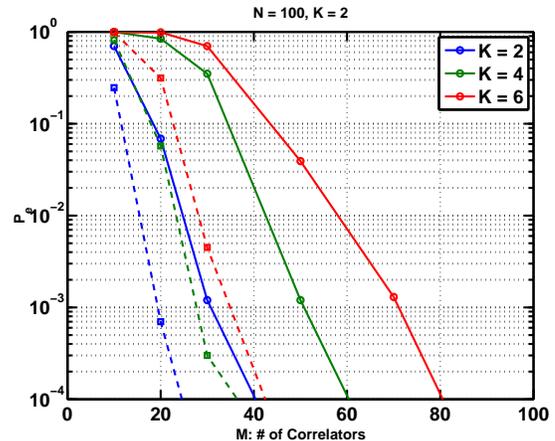}
        \caption{Comparison of RDD and RDDF in the presence of near-far problem, with amplitudes $r_n$ uniformly random in [1, 1.5], $N = 100$, $K = 2$, $\sigma^2 = 0.005$, and $\G = \I$. The solid lines correspond to RDD and the dashed lines correspond to RDDF.}
        \label{Fig:near_far}
    \end{figure}


\subsubsection{$P_e$ vs. $M$, performance of the noise whitening transform}\label{eg:prew}

Next we consider practical signature waveforms in CDMA systems. \yaor{There are many choices for signature sequences and the Gold code is one that is commonly used \cite{Goldsmith2005}. For signature sequences $\{s_{n\ell}\}$,} the signature waveforms are generated by $s_n(t) = \sum_{\ell=0}^{L-1} s_{n\ell} p(t-\ell T_c)$, where $L$ is the sequence length, $T_c\ll T$ is the chip duration, and the sequences $\{s_{n\ell}\}$ are modulated by unit-energy chip waveform $p(t)$ with $ \int |p(t)|^2 dt = 1$ and $\int p(t-\ell T_c) p(t- kT_c) dt = 0$, $\ell \neq k$.  
For Gold codes, we choose $m = 10$ (with length $L = 2^{10} - 1 = 1023$ and 1025 possible codewords) \cite{WangGeICC2008}. We use $100$ Gold codes to support $N = 100$ users.  The Gram matrix of the Gold code is given by 
\begin{equation}
\G = \frac{L+1}{L} \I_{N\times N} - \frac{1}{L} \vect{1}\vect{1}\transpose, \label{Gold_G}
\end{equation}  
which has two distinct eigenvalues. In this example, $\lambda_1 = (N+1)/N = 1.0010$, $\lambda_2 = (L-N+1)/L = 0.8768$, $\lambda_{\max}(\G^{-1}) = 1.1405$ and hence the signature waveforms are nearly orthogonal. \yaor{We also consider  a simulated $\G = \vect{U} \mbox{diag}\{1/400, 2/400, \cdots, 100/400\}\vect{U} \transpose$ for a randomly generated unitary matrix $\vect{U}\in \mathbb{R}^{100\times 100}$, and hence $\lambda_{\max}(\G^{-1}) = 400$ which is much larger than that of the Gold codes.}
%
In Fig. \ref{Fig:Gold_largeSNR} and Fig. \ref{Fig:Gold_smallSNR}, when the signature waveforms are nearly orthogonal, the noise whitening transform does not reduce $P_e$ much. 
%
Fig. \ref{Fig:G_largeSNR} and Fig. \ref{Fig:Gold_smallSNR} show that the performance of the RDD and RDDF detectors can be significantly improved by the noise whitening transform for large $M$. 
We also verified that using the noise whitening transform cannot achieve the probability-of-error that is obtained with orthogonal signature waveforms $\G = \I$. This is because the noise whitening transform distorts the signal component.


\begin{figure}[h]
\begin{center}
\subfloat[Gold code, $\sigma^2 = 0.005$]{
\includegraphics[width = .4\linewidth]{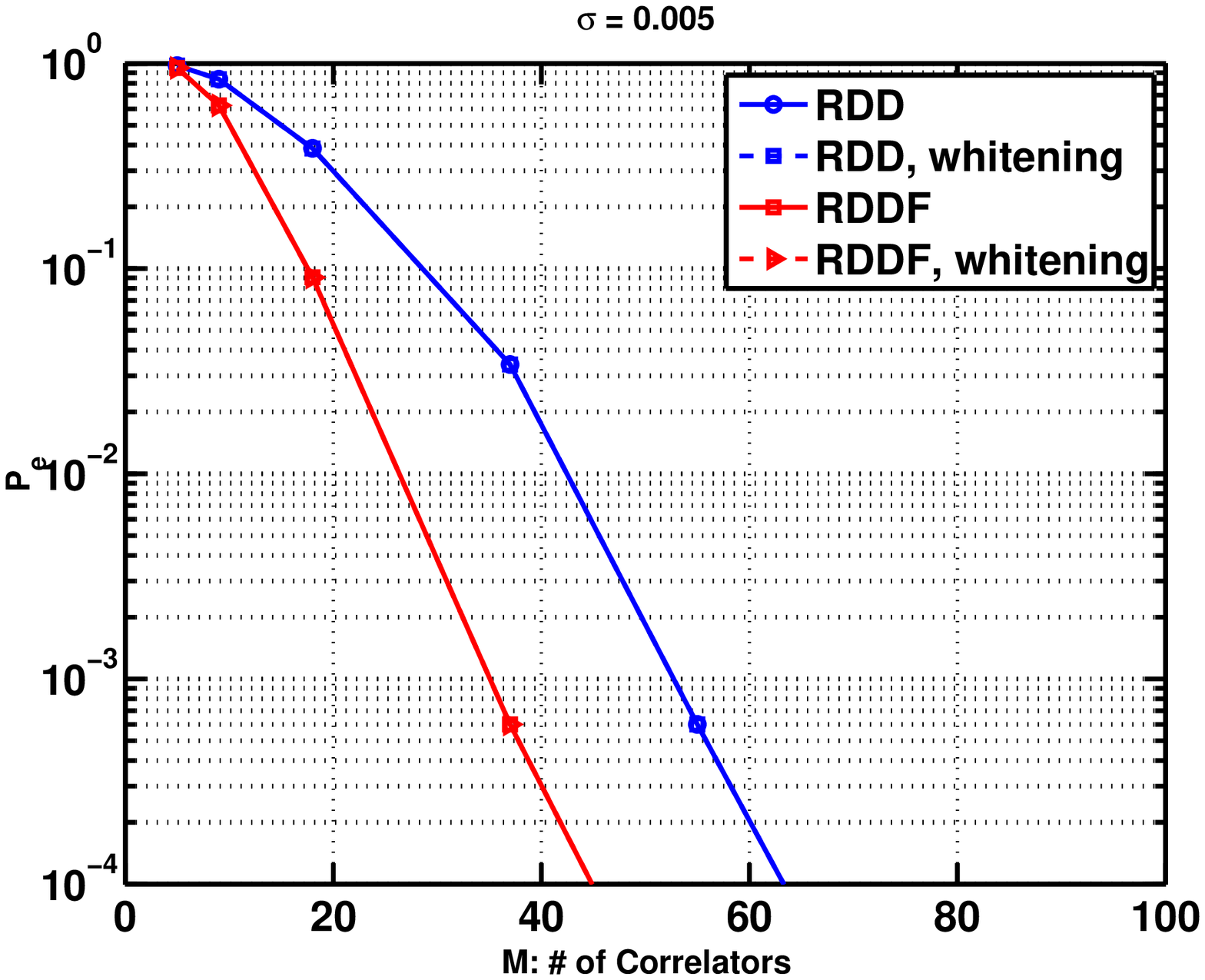}
\label{Fig:Gold_largeSNR}}
\quad
\subfloat[Gold code, $\sigma^2 = 0.01$]
{
\includegraphics[width = .4\linewidth]{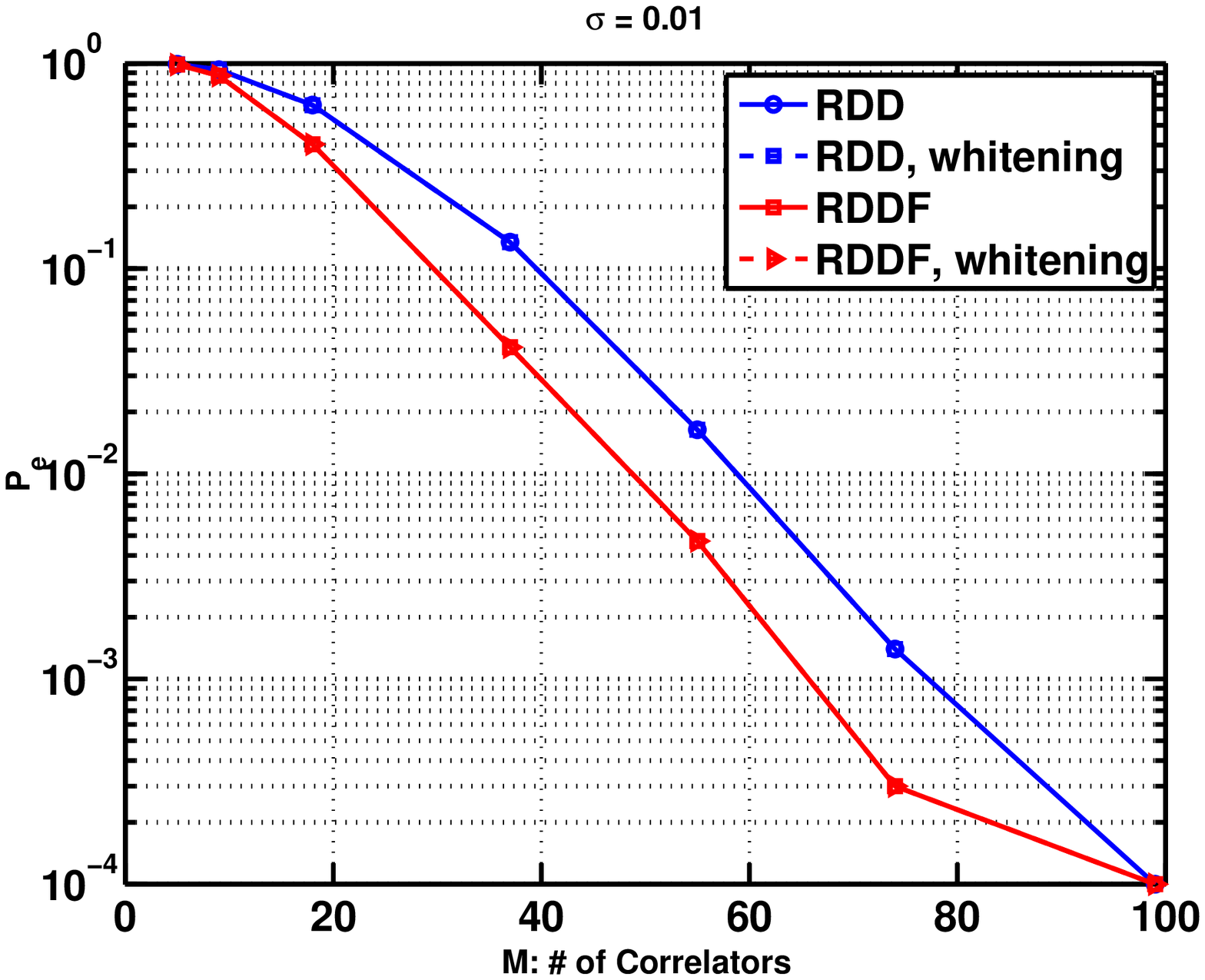} \label{Fig:Gold_smallSNR}}\\
\subfloat[Simulated $\G$, $\sigma^2 = 0.005$]{
\includegraphics[width = .4\linewidth]{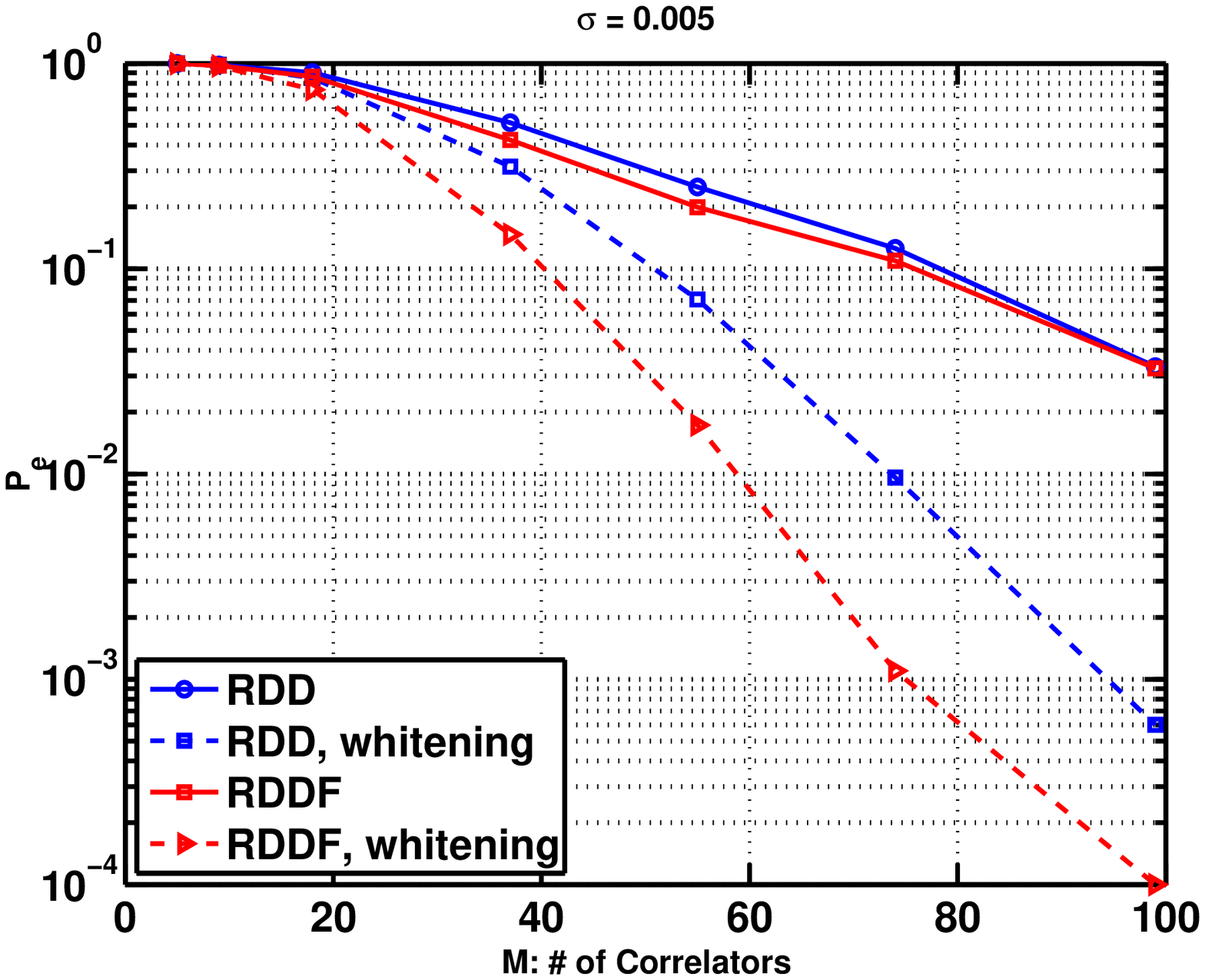}
\label{Fig:G_largeSNR}
}
\quad
\subfloat[Simulated $\G$, $\sigma^2 = 0.01$]
{
\includegraphics[width = .4\linewidth]{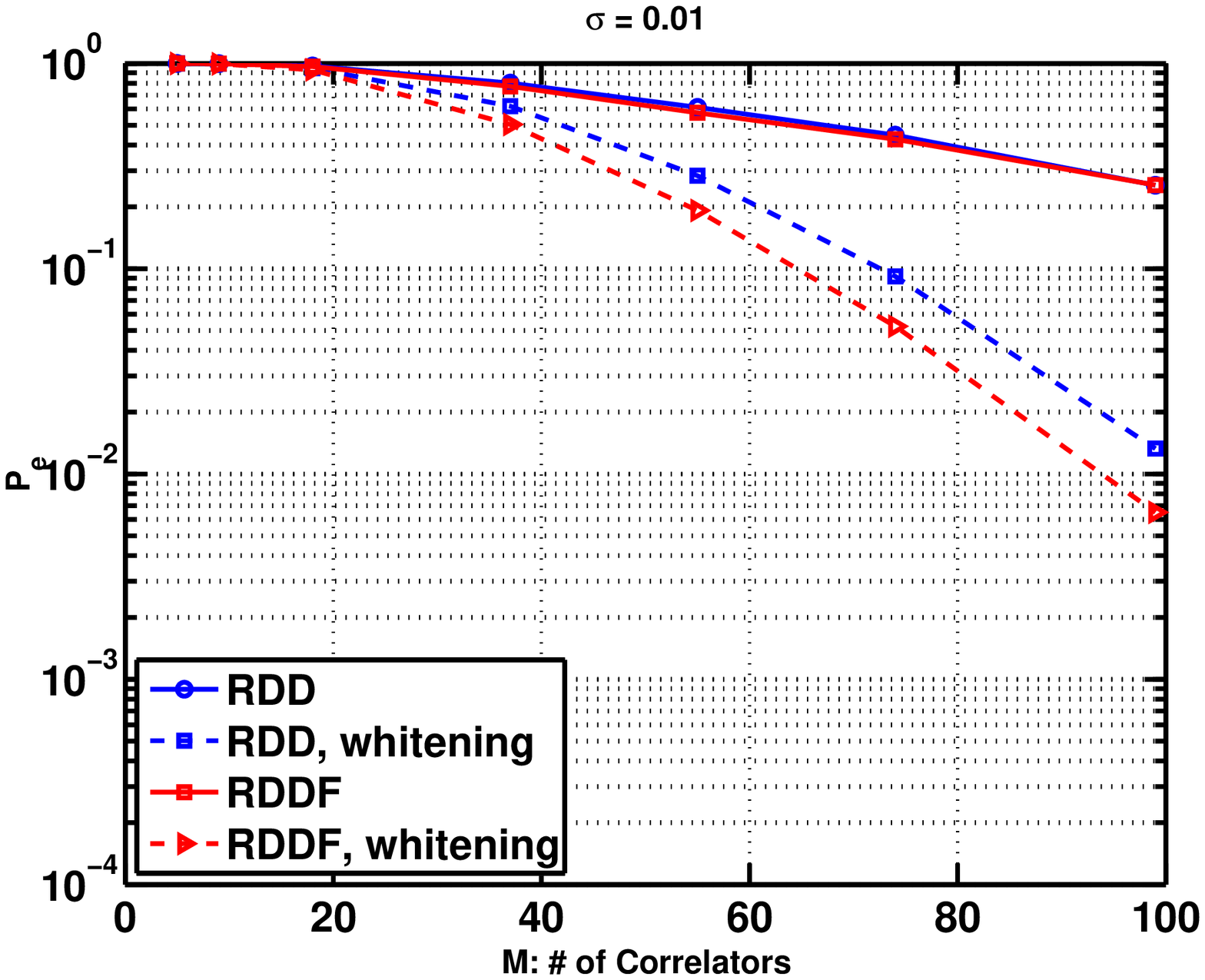}
\label{Fig:G_smallSNR}}
\end{center}
\caption{Comparison of RDD and RDDF detectors with and without noise whitening when $N = 100$, $K = 2$,  amplitudes $r_n$ uniformly random in [1, 1.5], and the following settings for $\G$ and $\sigma^2$: (a) Gold codes with $\lambda_{\max}(\G^{-1}) = 1.1405$, $\sigma^2 =0.005$, (b) same Gold codes as in (a) but $\sigma^2 = 0.01$, (c) simulated $\G$ with $\lambda_{\max}(\G^{-1}) = 400$, $\sigma^2 = 0.005$, (d) same simulated $\G$ as in (c) but $\sigma^2 = 0.01$.
}
\label{Fig:prew}
\end{figure}

\subsubsection{$P_e$ vs. $M$, RD-MUD linear detectors}\label{sec:eg_linear_det}
    
To compare performance of the RD-MUD linear detectors, we consider two sets of schemes. The first are one-step methods: using (\ref{support}) for active user detection followed by symbol detection using (\ref{RD-MUD-sign}) (corresponds to RDD), (\ref{MMSE_RDMUD}) (corresponds to RD-MMSE), or (\ref{LS}) (corresponds to RD-LS), respectively. The second set of schemes detects active users and symbols iteratively: \yaor{the RDDF detector, the modified RDDF detector, modified by replacing the symbol detection by the RD-LS detector (\ref{LS}) on the detected support in each iteration $\mathcal{I}^{(k)}$, and the modified RDDF detector, modified by replacing the symbol detection by the MMSE detector (\ref{MMSE_RDMUD}) on the detected support in each iteration $\mathcal{I}^{(k)}$.} Assume $N = 100$, $K = 2$, $r_n = 1$ for all $n$, and $\sigma^2 = 0.005$. Again we consider Gold codes as defined in Section \ref{eg:prew}. As showed by Table \ref{Table1}, iterative methods including RDDF outperform the one-step methods including RDD. However, the difference between various symbol detection methods is very small, since  active user detection error dominates the symbol detection error. By examining the conditional probability-of-symbol-error $\mathbb{P}\{\hat{\bb}\neq \bb|\hat{\mathcal{I}}=\mathcal{I}\}$, in Fig. \ref{Fig:linear_det} we see that both RD-LS and RD-MMSE detectors have an advantage over sign detection. 

\begin{figure}[htb]
    \centering
        \includegraphics[width=.8\linewidth]{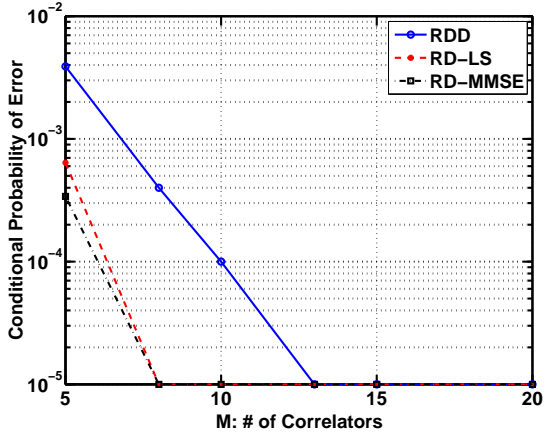}
            \caption{Comparison of $\mathbb{P}\{\hat{\bb}\neq \bb|\hat{\mathcal{I}}=\mathcal{I}\}$ for RDD, RD-LS, and RD-MMSE in the same setting as that in Fig. \ref{Fig:Gold_largeSNR}. }
        \label{Fig:linear_det}
            \end{figure}

\begin{table}[h]
\caption{$\mathbb{P}\{\hat{\bb}\neq \bb|\hat{\mathcal{I}}=\mathcal{I}\}$ vs. $M$, $N = 100$, $K = 2$}
\begin{center}
\begin{tabular}{p{3cm}|lllll}
\hline
& \multicolumn{4}{c}{$M$}  \\\hline
Methods & 5 & 9 & 18 & 37  \\\hline
RDD & 0.9780  &  0.8400 &   0.3857  &  0.0342  \\\hline
RD-LS &  0.9780   & 0.8400  &  0.3857 &   0.0342   \\\hline
RD-MMSE & 0.9779  &  0.8400 &  0.3857 &   0.0342   \\\hline
RDDF &0.9527 &   0.6248   & 0.0905&    0.0006  \\\hline
modified RDDF with LS &  0.9526  &  0.6247   & 0.0905   &  0.0006 \\\hline
modified RDDF with MMSE & 0.9526  & 0.6247  &  0.0905   & 0.0006  \\\hline
\end{tabular}
\end{center}
\label{Table1}
\end{table}

\section{Conclusions and Discussions}\label{sec:conclusion}

We have developed a reduced dimension multiuser detection (RD-MUD) structure, assuming symbol-rate synchronization, which decreases the number of correlators at the front-end of a MUD receiver by exploiting the fact that the number of active users is typically much smaller than the total number of users in the system. \yx{The front-end of the RD-MUD} \ye{is motivated by analog CS and it} projects the received signal onto a lower dimensional detection subspace by correlating the received signal with a set of correlating signals. The correlating signals are constructed as linear combinations of the \yaor{signature} waveforms using a coefficient matrix $\A$, which determines the performance of RD-MUD and is our key design parameter. Based on the front-end output, RD-MUD detectors recover active users and their symbols in the detection subspace. 

We studied in detail two such detectors. The RDD detector, which is a linear detector that combines subspace projection along with thresholding for active user detection and RDDF detector, which is a nonlinear detector that combines decision-feedback matching pursuit for active user detection. We have shown that to achieve a desired probability-of-symbol-error, the number of correlators used by RD-MUD can be much smaller than that used by conventional MUD, and the complexity-per-bit of the RD-MUD detectors are not higher than their counterpart in the conventional MUD setting. In particular, when the random partial DFT matrix is used for the coefficient matrix $\A$ and the RDD and RDDF detectors are used for detection, the RD-MUD front-end requires a number of correlators proportional to log of the number of users, whereas the conventional MF-bank front-end requires a number of correlators equal to the number of users in the system.
We obtained theoretical performance guarantees for the RDD and RDDF detectors in terms of the coherence of $\A$, which are validated via numerical examples. 

\yx{In contrast to other work exploiting compressed sensing techniques for multiuser detection, our work has several distinctive features: (1) we consider analog \ag{received multiuser} signals; (2) we consider front-end complexity, which is the number of filters/correlators at the front-end to perform the analog-to-discrete conversion; (3) the noise is added in the analog domain \ag{prior to processing of the front-end}, so that the output noise vector can be colored due to front-end filtering; (4) \ag{we modify several \ye{conventional} compressed sensing estimation algorithms to make them applicable for symbol detection and study their probability-of-symbol-error performance}.}

Our results are based on binary modulation and can be extended to higher order modulation with symbols taking more possible values. In this case, however, the conditions to guarantee correct symbol detection may be stronger than the conditions to guarantee correct active user detection. We have also assumed that the signature waveforms are given. Better performance of RD-MUD might be obtained through joint optimization of the signature waveforms and the coefficient matrix $\A$. \yaor{Our results assume a synchronous channel model. Extending the ideas of this work to asynchronous channels perhaps using the methods developed in \cite{GedalyahuEldar2010} for time-delay recovery from low-rate samples, is a topic of future research. }

\section*{Acknowledgment}

The authors would like to thank Robert Calderbank and Lorne Applebaum for providing helpful suggestions with the numerical example regarding Kerdock codes.

\appendices

\section{Derivation of RD-MUD MMSE}\label{app:RD_MMSE}

Given the active user index set $\hat{\mathcal{I}}$ obtained from (\ref{support}), we define $\W = \A_{\hat{\mathcal{I}}}\R_{\hat{\mathcal{I}}}^2\A_{\hat{\mathcal{I}}}^H + \sigma^2\A\G^{-1}\A^H$, and $\bar{\M} = \R_{\hat{\mathcal{I}}}\A_{\hat{\mathcal{I}}}^H\W^{-1}$. We want to show that $\bar{\M}=\arg\min_{\M}\mathbb{E}\{\|\bb_{\hat{\mathcal{I}}} - \M\yb\|^2\}$. Using the same method for deriving the conventional MMSE detector of the MF-bank \cite{verduMUD1998}, 
we assume that $\bb_{\hat{\mathcal{I}}}$ has a distribution that is uncorrelated with the noise $\wb$ and that $\mathbb{E}\{\bb_{\hat{\mathcal{I}}} \bb_{\hat{\mathcal{I}}}^H\} = \I$. Based on $\hat{\mathcal{I}}$, we refer to the model (\ref{restrict}), and write the MSE as $\mathbb{E}\{\|\bb_{\hat{\mathcal{I}}} - \M\yb\|^2\} = \mbox{tr}(\mathbb{E}\{(\bb_{\hat{\mathcal{I}}}-\M\y)(\bb_{\hat{\mathcal{I}}}-\M\y)^H\})$. Now we expand
\begin{equation}
\begin{split}
&\quad \mathbb{E}\{(\bb_{\hat{\mathcal{I}}} -\M\y)(\bb_{\hat{\mathcal{I}}} -\M\y)^H\}\\
&=\mathbb{E}\{\bb_{\hat{\mathcal{I}}} \bb_{\hat{\mathcal{I}}} ^H\} -\mathbb{E}\{\bb_{\hat{\mathcal{I}}} \yb^H\}\M^H - \M \mathbb{E}\{\yb\bb_{\hat{\mathcal{I}}} ^H\}\\
&\quad+\M\mathbb{E}\{\yb\yb^H\}\M^H\\
&= \I + \M (\A_{\hat{\mathcal{I}}} \R_{\hat{\mathcal{I}}}^2 \A_{\hat{\mathcal{I}}}^H +\sigma^2\A\G^{-1}\A^H)\M^H \\
&\qquad-\R_{\hat{\mathcal{I}}}\A_{\hat{\mathcal{I}}}^H\M^H - \M\A_{\hat{\mathcal{I}}}\R_{\hat{\mathcal{I}}}.\label{corr_app}
\end{split}
\end{equation}
 It can be verified that $\M\A_{\hat{\mathcal{I}}}\R_{\hat{\mathcal{I}}} = \M \W\bar{\M}^H$. Hence from (\ref{corr_app}), we have
\begin{equation}
\begin{split}
 &\mathbb{E}\{(\bb_{\hat{\mathcal{I}}} -\M\y)(\bb_{\hat{\mathcal{I}}} -\M\y)^H\} \\
 &=
 \I + \M\W\M^H  - \bar{\M}\W\M^H - \M\W\bar{\M}^H\\
 & = \I - \bar{\M}\W\bar{\M}^H 
+ (\M-\bar{\M})\W(\M-\bar{\M})^H\\
 & = \I - \R_{\hat{\mathcal{I}}}\A_{\hat{\mathcal{I}}}^H
\W^{-1}\A_{\hat{\mathcal{I}}}\R_{\hat{\mathcal{I}}} \\
&\quad+ (\M-\bar{\M})\W(\M-\bar{\M})^H.
\end{split}
\label{appeqn1}
\end{equation}
Since $\W$ is a positive semidefinite matrix, the trace of the second term in (\ref{appeqn1}) is always nonnegative. Therefore, the matrix $\M$ that minimizes the MSE is $\bar{\M}$.

\section{Proof of Theorem \ref{thm_noisy}}
\label{proof_thm_noisy}

The proof of Theorem \ref{thm_noisy} for both the RDD and RDDF detectors are closely related. We therefore begin by proving several lemmas that are useful for both results. 

First, we prove that the random event $\mathcal{G}$ defined in (\ref{event})
occurs with high probability,
where $\tau$ is defined in (\ref{def_tau}). Then we show that when $\mathcal{G}$ occurs, both algorithms can detect the active users and their symbols. The proofs follow the arguments in \cite{Ben-HaimEldarElad2010} with modifications to account for the fact that $\wb$ is colored noise, and the error can also be caused by incorrect symbol detection. However, as we will show, the error probability of active user detection dominates the latter case.

\begin{lemma} \label{lemma_noise_bound}
Suppose that $\wb$ is a Gaussian random vector with zero mean and covariance $\sigma^2 \A\G^{-1}\A^H$. If $ N^{-(1+\alpha)}[\pi(1+\alpha)\log N]^{-1/2} \leq 1$ for some $\alpha > 0$, then the event $\mathcal{G}$ of (\ref{event}) occurs with probability  at least one minus (\ref{high_prob_noisy}).
\end{lemma}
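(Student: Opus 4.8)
The plan is to bound the maximum of the $N$ scalar random variables $|\ab_n^H \wb|$ by a union bound over $n$, using a standard Gaussian tail estimate for each term after computing its variance. First I would observe that each $\ab_n^H \wb$ is a (complex-)Gaussian scalar with zero mean; since $\wb$ has covariance $\sigma^2 \A\G^{-1}\A^H$, the variance of $\ab_n^H \wb$ is $\sigma^2 \ab_n^H \A\G^{-1}\A^H \ab_n$. I would then upper bound this variance by $\sigma^2 \lambda_{\max}(\G^{-1}) \, \ab_n^H \A\A^H \ab_n \le \sigma^2 \lambda_{\max}(\G^{-1}) \max_n (\ab_n^H \A\A^H \ab_n)$, which is exactly $\tau^2 / (2(1+\alpha)\log N)$ by the definition (\ref{def_tau}). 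Because we take real parts in the detectors, the relevant quantity is $|\Re[\ab_n^H\wb]|$ (or at worst $|\ab_n^H\wb|$), whose standard deviation is at most $\sigma\sqrt{\lambda_{\max}(\G^{-1})\max_n(\ab_n^H\A\A^H\ab_n)}$; denote this bound by $s$, so $\tau = s\sqrt{2(1+\alpha)\log N}$.

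Next I would apply the Gaussian tail bound $Q(x) \le \frac{1}{x\sqrt{2\pi}} e^{-x^2/2}$ (already quoted in the paper) to each term: writing $x = \tau/s = \sqrt{2(1+\alpha)\log N}$, we get $\mathbb{P}\{|\Re[\ab_n^H\wb]| \ge \tau\} \le 2Q(x) \le \frac{2}{x\sqrt{2\pi}} e^{-x^2/2}$. Substituting $x^2 = 2(1+\alpha)\log N$ gives $e^{-x^2/2} = N^{-(1+\alpha)}$ and $x\sqrt{2\pi} = 2\sqrt{\pi(1+\alpha)\log N}$, so each term is at most $N^{-(1+\alpha)}[\pi(1+\alpha)\log N]^{-1/2}$. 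A union bound over $n = 1, \dots, N$ multiplies this by $N$, yielding $\mathbb{P}(\mathcal{G}^c) \le N^{-\alpha}[\pi(1+\alpha)\log N]^{-1/2}$, which is precisely the bound (\ref{high_prob_noisy}); hence $\mathbb{P}(\mathcal{G}) \ge 1 - N^{-\alpha}[\pi(1+\alpha)\log N]^{-1/2}$. The hypothesis $N^{-(1+\alpha)}[\pi(1+\alpha)\log N]^{-1/2} \le 1$ just ensures the per-term probability is a legitimate probability (and that the final bound is nontrivial).

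The one point requiring a little care — and the main obstacle — is handling the complex-valued case cleanly: $\ab_n^H\wb$ is complex, and I need the tail bound for $|\Re[\ab_n^H\wb]|$, a real Gaussian. Its variance is $\Re[\ab_n^H\,\sigma^2\A\G^{-1}\A^H\,\ab_n]$ when the relevant cross-covariances behave as in the real case; since $\G^{-1}$ and $\A\A^H$ are Hermitian positive semidefinite, $\ab_n^H\A\G^{-1}\A^H\ab_n$ is real and nonnegative, and $\Re[\ab_n^H\wb]$ has variance at most this quantity. So the bound on $s$ goes through, and the factor of $2$ from the two-sided tail $\mathbb{P}\{|\Re[\ab_n^H\wb]|\ge\tau\} = 2Q(\tau/s)$ is exactly what appears in (\ref{high_prob_noisy}). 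I would also double-check the constant bookkeeping so that the $1/(x\sqrt{2\pi})$ prefactor lines up with $[\pi(1+\alpha)\log N]^{-1/2}$ after the union bound — this is the kind of routine arithmetic where an off-by-$\sqrt 2$ could creep in, so I would verify $x\sqrt{2\pi} = \sqrt{2(1+\alpha)\log N}\cdot\sqrt{2\pi} = 2\sqrt{\pi(1+\alpha)\log N}$ explicitly.
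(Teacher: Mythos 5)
Your proposal is correct and follows essentially the same route as the paper: compute the variance of each $\ab_n^H\wb$, bound it via $\lambda_{\max}(\G^{-1})\max_n(\ab_n^H\A\A^H\ab_n)$ so that $\tau$ corresponds to $\sqrt{2(1+\alpha)\log N}$ standard deviations, apply the Gaussian tail bound $Q(x)\leq (x\sqrt{2\pi})^{-1}e^{-x^2/2}$, and combine over the $N$ (dependent) variables. The only difference is that you combine the $N$ events with a direct union bound, whereas the paper invokes Sidak's lemma to get a product lower bound and then uses $(1-x)^N\geq 1-Nx$; the two yield the identical final expression, so your version is marginally more elementary with no loss.
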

\begin{proof}
The random variables $\{\ab_n^H \wb\}_{n=1}^N$ are jointly Gaussian, with means equal to zero, variances  $\sigma_n^2$ equal to $\sigma^2\ab^H_n \A\G^{-1}\A^H\ab_n$. 
Define
\begin{equation}
\hat{\tau} \triangleq \sigma[2(1+\alpha)\log N]^{1/2} \cdot\left[\max_n (\ab^H_n \A\G^{-1}\A^H\ab_n)\right]^{1/2}, \label{def_tau_hat}
\end{equation}
and an event
$
\hat{\mathcal{G}} \triangleq  \left\{\max_{1\leq n \leq N} |\ab_n^H \wb| < \hat{\tau}\right\}.$
Using Sidak's lemma  \cite{Sidak1967} , we have
\begin{equation}
\begin{split}
\mathbb{P}\left\{\hat{\mathcal{G}}\right\} &= \mathbb{P}\left\{|\ab_1^H \wb|< \hat{\tau}, \cdots, |\ab_N^H \wb|< \hat{\tau}\right\}\\
&\geq \prod_{n=1}^N \mathbb{P}\{|\ab_n^H \wb|< \hat{\tau}\}.
\end{split}
\label{hat_G}
\end{equation}

Since $\ab_n^H \wb$ is a Gaussian random variable with zero mean and variance $\sigma_n^2$, the tail probability of the colored noise can be written as
\begin{equation}
\mathbb{P}\{|\ab_n^H \wb| < \hat{\tau}) = 1 - 2Q\left(\frac{\hat{\tau}}{\sigma_n} \right\}.\label{q_bound}
\end{equation}
Using the bound on $Q(x)$: $Q(x)\leq (x\sqrt{2\pi})^{-1}e^{-x^2/2}$, (\ref{q_bound}) can be bounded as
$ \mathbb{P}\{|\ab_n^H \wb| < \hat{\tau}\} \geq 1 - \eta_n,$ where $\eta_n \triangleq \sqrt{2/\pi} (\sigma_n/\hat{\tau})e^{-\hat{\tau}^2/(2\sigma_n^2)}$.  
Define
$\sigma_{\max} \triangleq  \max_n \sigma_n = \sigma\left[\max_n (\ab_n^H\A\G^{-1}\A^H \ab_n)\right]^{1/2}$, 
$\eta_{\max} \triangleq \sqrt{2/\pi}(\sigma_{\max}/\hat{\tau})
e^{-\hat{\tau}^2/(2\sigma^2_{\max})}$.
Since ${\sigma_{\max}}/{\hat{\tau}} = [2(1+\alpha)\log N]^{-1/2}$ by the definition of $\hat{\tau}$,
we have $\eta_{\max} = \sqrt{2/\pi}[2(1+\alpha)\log N]^{-1/2}e^{-(1+\alpha)\log N}$.
It is easy to show that $\eta_n$ increases as $\sigma_n$ increases. Hence $\eta_n \leq \eta_{\max}$.
%
When $\eta_{\max} \leq 1$, we can use the inequality $(1-x)^N \geq 1- Nx$ when $x\geq 0$ and substitute the value of $\eta_{\max}$ to write (\ref{hat_G}) as
\begin{equation}
\begin{split}
&\mathbb{P}\{\hat{\mathcal{G}}\} \geq \prod_{n=1}^N (1-\eta_n) \geq (1-\eta_{\max})^N
\geq 1 - N\eta_{\max} \\
&\qquad= 1 - N^{-\alpha}[\pi(1+\alpha)\log N]^{-1/2},
\end{split}
   \label{G_bound}
\end{equation}
which holds for any $\eta_{\max} \leq 1$
and $N \geq 1$.
Next we show that $\hat{\tau}\leq \tau$. Note that
\begin{equation}
\begin{split}
&\ab_n^H \A\G^{-1}\A^H\ab_n  \\
&
\leq \|\A^H\ab_n\|^2 \lambda_{\max}(\G^{-1})  \\
& \leq [\max_n \left(\ab_n^H\A\A^H\ab_n\right)]\lambda_{\max}(\G^{-1}).
\end{split}
\label{34}
\end{equation}
From inequality (\ref{34}) and definitions (\ref{def_tau}) for $\tau$ and (\ref{def_tau_hat}) for $\hat{\tau}$, we obtain $\hat{\tau}\leq \tau$. Hence 
\begin{equation}
\begin{split}
&\mathbb{P}\{\mathcal{G}\} = \mathbb{P}\{\max_n |\ab_n^H \wb| < \tau\} \\
&\geq \mathbb{P}\{\max_n |\ab_n^H \wb|< \hat{\tau}\} = \mathbb{P}\{\hat{\mathcal{G}}\}.
\end{split}
\label{eqn2}
\end{equation}
Combining (\ref{G_bound}) and (\ref{eqn2}), we conclude that $P(\mathcal{G})$ is greater than one minus the expression (\ref{high_prob_noisy}), as required.
\end{proof}

The next lemma shows that, under appropriate conditions, ranking the inner products between $\ab_n$ and $\yb$ is an effective method of detecting the set of active users. The proof of this lemma is adapted from Lemma 3 in \cite{Ben-HaimEldarElad2010} to account for the fact that the signal vector $\yb$ here can be complex as $\A$ can be complex. Since the real part contains all the useful information, to prove this lemma, we follow the proof for Lemma 3 in \cite{Ben-HaimEldarElad2010} while using the following inequality whenever needed: $|\Re[\ab_n^H\ab_m]| \leq |\ab_n^H \ab_m| \leq \mu$ for $n\neq m$, and $|\Re[\ab_n^H \wb]| \leq |\ab_n^H \wb|$. \yx{The proofs are omitted due to space limitations. Details of the proof can be found in \cite{Xie2011PhD}.}
\begin{lemma}\label{lemma_active_user}
Let $\bb$ be a vector with support $\mathcal{I}$ which consists of $K$ active users, and let $\yb = \A\R\bb + \w$ for a Gaussian noise vector $\wb$ with zero mean and covariance $\A\G^{-1}\A^H$. Define $|r_{\max}|$ and $|r_{\min}|$ as in (\ref{gain_def}), and suppose that
\begin{equation}
|r_{\min}| - (2K-1)\mu |r_{\max}| \geq 2\tau. \label{cond_1}
\end{equation}
Then, if the event $\mathcal{G}$ of (\ref{event}) occurs, we have 
$\min_{n\in\mathcal{I}} |\Re[\ab_n^H \yb]| > \max_{n\notin\mathcal{I}} |\Re[\ab_n^H \yb]|.$
If, rather than (\ref{cond_1}), a weaker condition holds:
\begin{equation}
|r_{\max}| - (2K-1)\mu |r_{\max}| \geq 2\tau.\label{cond_2}
\end{equation}
Then, if the event $\mathcal{G}$ of (\ref{event}) occurs, we have
$\max_{n\in\mathcal{I}} |\Re[\ab_n^H \yb]| > \max_{n\notin\mathcal{I}} |\Re[\ab_n^H \yb]|.$ 
\end{lemma}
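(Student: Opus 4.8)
The plan is to prove Lemma~\ref{lemma_active_user} by a direct estimate of $\Re[\ab_n^H \yb]$ for each $n$, splitting into the cases $n \in \mathcal{I}$ and $n \notin \mathcal{I}$, and then taking the worst case over each class. Recall that $\yb = \A\R\bb + \wb = \sum_{\ell\in\mathcal{I}} a_{\ell} r_\ell b_\ell + \wb$, so that
\[
\Re[\ab_n^H \yb] = r_n b_n \Re[\ab_n^H\ab_n] + \sum_{\ell\in\mathcal{I},\,\ell\neq n} r_\ell b_\ell \Re[\ab_n^H\ab_\ell] + \Re[\ab_n^H\wb],
\]
where the first term is present only when $n\in\mathcal{I}$ and we use $\|\ab_n\|=1$ so $\ab_n^H\ab_n = 1$. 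The two elementary bounds we rely on throughout are $|\Re[\ab_n^H\ab_\ell]| \le |\ab_n^H\ab_\ell| \le \mu$ for $n\neq\ell$ (coherence, eq.~(\ref{def_coherence})) and $|\Re[\ab_n^H\wb]| \le |\ab_n^H\wb| < \tau$ on the event $\mathcal{G}$ (eq.~(\ref{event})), together with $|b_\ell|=1$ and $|r_\ell| \le |r_{\max}|$.

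First I would handle $n\notin\mathcal{I}$. Here the ``signal'' term is absent, the interference sum runs over all $K$ active indices (none equal to $n$), so
\[
|\Re[\ab_n^H\yb]| \le \sum_{\ell\in\mathcal{I}} |r_\ell|\,\mu + \tau \le K\mu|r_{\max}| + \tau,
\]
hence $\max_{n\notin\mathcal{I}} |\Re[\ab_n^H\yb]| \le K\mu|r_{\max}| + \tau$. Next, for $n\in\mathcal{I}$ the interference sum has only $K-1$ terms (removing $\ell=n$), and by the reverse triangle inequality
\[
|\Re[\ab_n^H\yb]| \ge |r_n| - (K-1)\mu|r_{\max}| - \tau.
\]
Taking the minimum over $n\in\mathcal{I}$ gives $\min_{n\in\mathcal{I}}|\Re[\ab_n^H\yb]| \ge |r_{\min}| - (K-1)\mu|r_{\max}| - \tau$, while taking the maximum replaces $|r_{\min}|$ by $|r_{\max}|$. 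For the first claim it then suffices that
\[
|r_{\min}| - (K-1)\mu|r_{\max}| - \tau \;>\; K\mu|r_{\max}| + \tau,
\]
which rearranges exactly to $|r_{\min}| - (2K-1)\mu|r_{\max}| \ge 2\tau$, i.e.\ (\ref{cond_1}) (the strict/non-strict distinction is harmless since $\mathcal{G}$ is an open event, or one can absorb it into the constant). The second claim follows identically with $|r_{\min}|$ replaced by $|r_{\max}|$ on the active-user side, giving condition (\ref{cond_2}).

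There is no serious obstacle here — the lemma is a routine coherence/union-type argument — but the one point that needs care is the bookkeeping of how many interference terms appear in each case ($K$ terms when $n\notin\mathcal{I}$ versus $K-1$ when $n\in\mathcal{I}$), since this asymmetry is precisely what produces the factor $(2K-1)$ rather than $2K$ in the hypotheses. I would also note explicitly that taking real parts throughout is legitimate because $r_\ell, b_\ell$ are real, so $\Re[r_\ell b_\ell \ab_n^H\ab_\ell] = r_\ell b_\ell \Re[\ab_n^H\ab_\ell]$, and that the factoring $\Re[\ab_n^H\ab_n]=1$ uses only the normalization $\|\ab_n\|=1$, not reality of $\ab_n$. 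The remaining inequalities ($|\Re[z]|\le|z|$, triangle inequality, $Q$-function tail already handled in Lemma~\ref{lemma_noise_bound}) are exactly the substitutions flagged in the paragraph preceding the statement, so the proof is a faithful transcription of Lemma~3 of \cite{Ben-HaimEldarElad2010} with these complex-to-real reductions inserted.
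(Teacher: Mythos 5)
Your proof is correct and follows exactly the route the paper intends: the paper omits the details, stating only that one adapts Lemma 3 of Ben-Haim--Eldar--Elad by inserting the inequalities $|\Re[\ab_n^H\ab_m]|\le|\ab_n^H\ab_m|\le\mu$ and $|\Re[\ab_n^H\wb]|\le|\ab_n^H\wb|<\tau$ on $\mathcal{G}$, which is precisely the coherence bookkeeping ($K$ interference terms for $n\notin\mathcal{I}$ versus $K-1$ for $n\in\mathcal{I}$, yielding the factor $2K-1$) that you carry out. Your handling of the strict-versus-nonstrict inequality via the strictness of the noise bound on $\mathcal{G}$ is also the right observation.
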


The following lemma demonstrates that the sign detector can effectively detect transmitted symbols for the RDD and RDDF detectors. \yx{This Lemma bounds the second term in $P_e$ that has not been considered in \cite{Ben-HaimEldarElad2010}.}
\begin{lemma}\label{lemma_symbols}
Let $\bb$ be a vector with $b_n \in \{1, -1\}$, for $n \in \mathcal{I}$ and $b_n = 0$ otherwise, and let $\yb = \A\R\bb + \wb$ for a Gaussian noise vector $\wb$ with zero mean and covariance $\sigma^2 \A\G^{-1}\A^H$. Suppose that
\begin{equation}
|r_{\min}| - (K-1)\mu |r_{\max}| \geq \tau. \label{sign_cond_1}
\end{equation}
Then, if the event $\mathcal{G}$ occurs, we have
\begin{equation}
\sign(r_n \Re[\ab_n^H\yb]) = b_n, \qquad n\in \mathcal{I}.\label{sign_1}
\end{equation}
If, instead of (\ref{sign_cond_1}), a weaker condition
\begin{equation}
|r_{\max}| + |r_{\min}| - 2(K-1)\mu |r_{\max}| \geq 2\tau \label{sign_cond_2}
\end{equation}
holds, then under the event $\mathcal{G}$, we have
$\sign(r_{n_1} \Re[\ab_{n_1}^H\yb]) = b_{n_1},$ 
for 
\begin{equation}
n_1 = \arg\max_n |\Re[\ab_n^H \yb]|. \label{n_1}
\end{equation}
\end{lemma}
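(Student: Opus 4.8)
The strategy is to decompose each inner product $\Re[\ab_n^H\yb]$ into a signal term, an interference term, and a noise term, then show that under condition (\ref{sign_cond_1}) the signal term for an active user $n\in\mathcal{I}$ dominates the sum of the interference and noise contributions, forcing the sign to agree with $r_nb_n$ (and hence with $b_n$ after multiplying by $r_n$, since $r_n^2>0$). Concretely, using $\yb = \A\R\bb+\wb$ and $\|\ab_n\|=1$, I would write
\begin{equation}
\Re[\ab_n^H\yb] = r_nb_n + \sum_{\ell\in\mathcal{I},\,\ell\neq n} r_\ell b_\ell\,\Re[\ab_n^H\ab_\ell] + \Re[\ab_n^H\wb].
\label{eq:decomp_sign}
\end{equation}
For $n\in\mathcal{I}$ there are at most $K-1$ interfering terms, each bounded in magnitude by $|r_{\max}|\mu$ using $|\Re[\ab_n^H\ab_\ell]|\le|\ab_n^H\ab_\ell|\le\mu$, while on the event $\mathcal{G}$ the noise term satisfies $|\Re[\ab_n^H\wb]|\le|\ab_n^H\wb|<\tau$. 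Hence the magnitude of the perturbation is at most $(K-1)\mu|r_{\max}|+\tau$, which is strictly less than $|r_n|\ge|r_{\min}|$ precisely when (\ref{sign_cond_1}) holds. Therefore $\Re[\ab_n^H\yb]$ has the same sign as $r_nb_n$, and multiplying by $r_n$ gives $\sign(r_n\Re[\ab_n^H\yb])=\sign(r_n^2 b_n)=b_n$, which is (\ref{sign_1}).

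For the second part, the index $n_1$ of (\ref{n_1}) is not known a priori to be in $\mathcal{I}$, so I would split into cases. If $n_1\notin\mathcal{I}$, then $b_{n_1}=0$ and the desired identity $\sign(r_{n_1}\Re[\ab_{n_1}^H\yb])=b_{n_1}$ would require $\Re[\ab_{n_1}^H\yb]=0$, which is generally false; so the real content is to show that the weaker hypothesis (\ref{sign_cond_2}) already forces $n_1\in\mathcal{I}$. This is exactly where Lemma \ref{lemma_active_user} enters: condition (\ref{cond_2}) there, $|r_{\max}|-(2K-1)\mu|r_{\max}|\ge 2\tau$, is implied by (\ref{sign_cond_2}) — indeed $|r_{\max}|+|r_{\min}|-2(K-1)\mu|r_{\max}| \le 2|r_{\max}| - 2(K-1)\mu|r_{\max}|$ is not quite the right direction, so I would instead verify directly that (\ref{sign_cond_2}) implies (\ref{cond_2}) by noting $|r_{\min}|\le|r_{\max}|$ gives $2\tau \le |r_{\max}|+|r_{\min}|-2(K-1)\mu|r_{\max}| \le 2|r_{\max}|-2(K-1)\mu|r_{\max}|$, wait — one must be careful: the correct implication uses $|r_{\min}| \le |r_{\max}|$ so $\mu|r_{\max}|$ terms need the coefficient $2K-1$ versus $2(K-1)$ reconciled. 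The cleanest route is: from (\ref{sign_cond_2}), $2\tau \le |r_{\max}|+|r_{\min}| - 2(K-1)\mu|r_{\max}|$, and since $|r_{\min}|\ge 0$ and we may also bound $|r_{\min}|\le|r_{\max}|$, I obtain $2\tau \le |r_{\max}| - (2K-2)\mu|r_{\max}| + |r_{\min}|$; the extra $\mu|r_{\max}|$ slack needed to reach $(2K-1)$ is absorbed because $|r_{\min}|\ge 0$ only if additionally $|r_{\min}|\ge\mu|r_{\max}|$, which follows from (\ref{sign_cond_2}) itself after rearrangement. Once $n_1\in\mathcal{I}$ is established, the first part's argument applies verbatim with $n=n_1$, except now I only have the weaker bound; so I re-run (\ref{eq:decomp_sign}) with $n=n_1\in\mathcal{I}$: the perturbation is at most $(K-1)\mu|r_{\max}|+\tau$ and I need this $<|r_{n_1}|$. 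Since $|r_{n_1}|\ge|r_{\min}|$, it suffices that $|r_{\min}| - (K-1)\mu|r_{\max}| \ge \tau$; but (\ref{sign_cond_2}) only gives $|r_{\max}|+|r_{\min}|-2(K-1)\mu|r_{\max}|\ge 2\tau$, i.e. the \emph{average} of $|r_{\max}|-(K-1)\mu|r_{\max}|-\tau$ and $|r_{\min}|-(K-1)\mu|r_{\max}|-\tau$ is nonnegative — so I must use a sharper estimate of the interference when $n_1$ is the \emph{maximizer}.

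The resolution of the sharper estimate is the main obstacle and the place I expect the real work. When $n_1=\arg\max_n|\Re[\ab_n^H\yb]|$, I know $|\Re[\ab_{n_1}^H\yb]| \ge |\Re[\ab_m^H\yb]|$ for every $m\in\mathcal{I}$, and in particular for the index $m^\star\in\mathcal{I}$ achieving $|r_{m^\star}|=|r_{\max}|$. Expanding $|\Re[\ab_{m^\star}^H\yb]| \ge |r_{\max}| - (K-1)\mu|r_{\max}| - \tau$ from (\ref{eq:decomp_sign}), I get a \emph{lower} bound on $|\Re[\ab_{n_1}^H\yb]|$, namely $|\Re[\ab_{n_1}^H\yb]| \ge |r_{\max}|(1-(K-1)\mu) - \tau$. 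On the other hand, writing $\Re[\ab_{n_1}^H\yb] = r_{n_1}b_{n_1} + (\text{interference}) + (\text{noise})$ with $|\text{interference}+\text{noise}| \le (K-1)\mu|r_{\max}| + \tau$, if the sign were \emph{wrong}, i.e. $\sign(\Re[\ab_{n_1}^H\yb]) \ne \sign(r_{n_1}b_{n_1})$, then $|\Re[\ab_{n_1}^H\yb]| \le |\text{interference}+\text{noise}| - |r_{n_1}| \le (K-1)\mu|r_{\max}| + \tau - |r_{\min}|$, wait that is negative; rather the correct statement is: a wrong sign forces $|r_{n_1}| \le |\text{interference}+\text{noise}|$, hence $|\Re[\ab_{n_1}^H\yb]| \le 2((K-1)\mu|r_{\max}|+\tau) - |r_{n_1}|$. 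Combining the lower bound $|r_{\max}|(1-(K-1)\mu)-\tau$ with this upper bound $2(K-1)\mu|r_{\max}|+2\tau-|r_{\min}|$ yields a contradiction exactly when $|r_{\max}|(1-(K-1)\mu)-\tau > 2(K-1)\mu|r_{\max}|+2\tau-|r_{\min}|$, which rearranges to $|r_{\max}|+|r_{\min}| - 3(K-1)\mu|r_{\max}| > 3\tau$ — still not matching (\ref{sign_cond_2}). I would therefore follow the bookkeeping in \cite{Ben-HaimEldarElad2010} and \cite{Xie2011PhD} precisely, which partitions the interference more carefully (separating the term from the maximizing index from the rest) to get the exact constants $2(K-1)\mu|r_{\max}|$ and $2\tau$; the key idea is that the column $\ab_{n_1}$ cannot be simultaneously maximally correlated with the noise-plus-interference and have a large true signal pulling the other way. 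I would present the bookkeeping carefully, using $|\Re[\ab_n^H\ab_\ell]|\le\mu$ and $|\Re[\ab_n^H\wb]|<\tau$ on $\mathcal{G}$ throughout, and conclude (\ref{sign_1}) for $n=n_1$.
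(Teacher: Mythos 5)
Your first part is correct and coincides with the paper's argument: expand $\Re[r_n\ab_n^H\yb]=r_n^2b_n+\sum_{m\neq n}b_mr_nr_m\Re[\ab_n^H\ab_m]+r_n\Re[\ab_n^H\wb]$, bound the perturbation by $|r_n|[(K-1)\mu|r_{\max}|+\tau]$ on $\mathcal{G}$, and invoke (\ref{sign_cond_1}). The second part, however, has a genuine gap that you yourself flag by deferring the ``bookkeeping'' to the references: your contradiction closes only under $|r_{\max}|+|r_{\min}|-3(K-1)\mu|r_{\max}|>3\tau$, not under (\ref{sign_cond_2}). The error is that you discarded the correct wrong-sign bound. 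Writing $r_{n_1}\Re[\ab_{n_1}^H\yb]=r_{n_1}^2b_{n_1}+P$ with $|P|\leq|r_{n_1}|[(K-1)\mu|r_{\max}|+\tau]$, a wrong sign forces $r_{n_1}^2b_{n_1}+P$ and $r_{n_1}^2b_{n_1}$ to have opposite signs, so $|r_{n_1}||\Re[\ab_{n_1}^H\yb]|=|P|-r_{n_1}^2\leq|r_{n_1}|[(K-1)\mu|r_{\max}|+\tau-|r_{\min}|]$ --- with no factor of $2$. You wrote exactly this quantity down, rejected it as ``negative,'' and replaced it with the looser $2((K-1)\mu|r_{\max}|+\tau)-|r_{n_1}|$; but $(K-1)\mu|r_{\max}|+\tau-|r_{\min}|$ is only forced to be nonpositive by the \emph{stronger} hypothesis (\ref{sign_cond_1}), not by (\ref{sign_cond_2}), and it is precisely the bound the paper uses. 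Comparing it with the lower bound $|\Re[\ab_{n_0}^H\yb]|>|r_{\max}|-(K-1)\mu|r_{\max}|-\tau$ at the index $n_0$ of the largest gain, the maximality of $n_1$ is contradicted exactly when $|r_{\max}|-(K-1)\mu|r_{\max}|-\tau\geq(K-1)\mu|r_{\max}|+\tau-|r_{\min}|$, which is (\ref{sign_cond_2}) verbatim.

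A second, smaller issue: your attempt to show that (\ref{sign_cond_2}) forces $n_1\in\mathcal{I}$ by deducing condition (\ref{cond_2}) of Lemma \ref{lemma_active_user} cannot succeed, because (\ref{sign_cond_2}) is strictly \emph{weaker} than (\ref{cond_2}) (their left-hand sides differ by $|r_{\min}|+\mu|r_{\max}|>0$). The paper does not attempt this: its proof of the second part simply assumes $b_{n_1}\in\{1,-1\}$, and in the proof of Theorem \ref{thm_noisy} the membership $n_1\in\mathcal{I}$ is supplied separately by Lemma \ref{lemma_active_user}, since the operative hypothesis (\ref{cond_OMP}) there implies both (\ref{cond_2}) and (\ref{sign_cond_2}). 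So the lemma's second claim should be read as conditional on $n_1$ being an active user, and no derivation of that fact from (\ref{sign_cond_2}) alone is needed or possible.
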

\begin{proof}
To detect correctly, for $b_n = 1$, $\Re[r_n \ab_n^H \yb]$ has to be positive, and for $b_n = -1$, $\Re[r_n \ab_n^H \yb]$ has to be negative. 

First assume $b_n = 1$. We expand $\Re[r_n \ab_n^H \yb]$, find the lower-bound and the condition such that the lower bound is positive. Substituting in the expression for $\yb$, using the inequality that $x + y + z \geq x - |y| - |z|$, 
under the event $\mathcal{G}$, we obtain
\begin{equation}
\begin{split}
    &\Re[r_n \ab_n^H \yb] \\
    &= |r_n|^2   + \sum_{m\neq n} b_m r_n r_m\Re\left[\ab_n^H \ab_m\right] + r_n \Re\left[ \ab_n^H \wb\right]\\
    &\geq |r_n| |r_{\min}|    -  \sum_{m\neq n} |r_n|| r_m||\Re\left[\ab_n^H \ab_m\right]|\\
    &\qquad-  |r_n||\Re\left[\ab_n^H \wb\right]|\\
    & >   |r_n| \left[|r_{\min}|   - (K-1)\mu  |r_{\max}| - \tau\right].       \end{split}\label{69}
\end{equation}
From (\ref{69}), $\Re[r_n \ab_n^H \yb] >0$ for $n\in\mathcal{I}$ if (\ref{sign_cond_1}) holds and $b_n = 1$.

Similarly, we can show for $b_n = -1$, under event $\mathcal{G}$, if (\ref{sign_cond_1}) holds, $\Re[r_n \ab_n^H \yb] < 0$. Hence if (\ref{sign_cond_1}) holds we obtain (\ref{sign_1}).

Recall that $n_0$ is the index of the largest gain: $|r_{n_0}|= |r_{\max}|$. Due to (\ref{n_1}), we have
\begin{equation}
|\Re[\ab_{n_1}^H\yb]| \geq |\Re[\ab_{n_0}^H\yb]|. \label{key_OMP}
\end{equation}
We will show that under the event $\mathcal{G}$, once (\ref{sign_cond_2}) holds, then $\sign(r_{n_1} \Re[\ab_{n_1}^H\yb])\neq b_{n_1}$ leads to a contradiction to (\ref{key_OMP}).
First assume $b_{n_1} = 1$. If $\hat{b}_{n_1} = \sign(r_{n_1}\Re[\ab_{n_1}^H \yb])  \neq b_{n_1}$, then 
\begin{equation}
\begin{split}
    &\hat{b}_{n_1} \\
 &   = \sgn\left(r_{n_1}^2  + \sum_{m\neq n_1} b_m r_{n_1} r_m \Re\left[\ab_{n_1}^H \ab_m\right] + r_{n_1}\Re\left[ \ab_{n_1}^H \wb\right]\right) \\&= -1. \end{split}\label{det}
\end{equation}
So the expression inside the $\sgn$ operator of (\ref{det})  must be negative. Since $r_{n_1}^2 > 0$, we must have
\begin{equation}
\sum_{m\neq n_1} b_m r_{n_1} r_m \Re\left[\ab_{n_1}^H \ab_m\right] + r_{n_1}\Re\left[ \ab_{n_1}^H \wb\right] < 0. \label{inter}
\end{equation}

Multiplying the left-hand-side of (\ref{key_OMP}) by $|r_{n_1}|$, and using the equality $|x|\cdot|y| = |xy|$, we obtain
\begin{equation}
\begin{split}
&|r_{n_1}| |\Re[\ab_{n_1}^H \yb]|\\
&= |r_{n_1}| \left|r_{n_1} + \sum_{m\neq n_1} b_m r_m \Re[\ab_{n_1}^H \ab_m] + \Re[\ab_{n_1}^H \wb]\right|\\
&= \left|r_{n_1}^2 + \sum_{m\neq n_1} b_m r_{n_1} r_m \Re[\ab_{n_1}^H \ab_m] + r_{n_1}\Re[\ab_{n_1}^H \wb]\right|. 
\end{split}
\label{82}
\end{equation} 

Due to (\ref{det}), the last line of (\ref{82}) inside the $|\cdot|$ operator is negative. Using the fact that $r_{n_1}^2 > 0$ and (\ref{inter}), and the identity $|x+y| = -(x + y) = |y| -x$ when $x+y < 0$ and $y<0$, under the event $\mathcal{G}$, we obtain that 
\begin{equation}
\begin{split}
&|r_{n_1}| |\Re[\ab_{n_1}^H \yb]| \\
&= \left|\sum_{m \neq n_1} b_m r_{n_1} r_m \Re\left[\ab_{n_1}^H \ab_m\right]
+ r_{n_1}\Re\left[\ab_{n_1}^H \wb\right]\right|- r_{n_1}^2\\
&< |r_{n_1}|(K-1)\mu |r_{\max}| + |r_{n_1}|\tau - |r_{n_1}||r_{\min}|\\
&=   |r_{n_1}|[(K-1)\mu |r_{\max}| + \tau - |r_{\min}|]. 
\end{split}
\label{eqn84}
\end{equation}

On the other hand, multiply the right-hand-side of (\ref{key_OMP}) by $|r_{n_1}|$. Similarly, using the equality $|x|\cdot |y| = |xy|$ and triangle inequality, under the event $\mathcal{G}$, we obtain
\begin{equation}
\begin{split}
&|r_{n_1}||\Re[\ab_{n_0}^H \yb]|\\
&= \left|r_{n_1} r_{n_0} b_{n_0} + \sum_{m \neq n_0} b_m r_{n_1} r_m \Re\left[\ab_{n_0}^H \ab_m\right] + r_{n_1}\Re\left[\ab_{n_0}^H \wb\right]\right|\\
&> 
|r_{n_1}|[|r_{\max}| -  (K-1)\mu |r_{\max}|  - \tau]. 
\end{split}
\label{eqn86}
\end{equation}

Combining (\ref{eqn84}) and (\ref{eqn86}), we have that once (\ref{sign_cond_2}) holds, if $b_{n_1}=1$, then $\sgn(r_{n_1} \Re[\ab_{n_1}^H \yb]) = -1$ leads to $|\Re[\ab_{n_1}^H \yb]| <|\Re[\ab_{n_0}^H \yb]|$,
which contradicts (\ref{key_OMP}), and hence $\sgn(r_{n_1} \Re[\ab_{n_1}^H \yb]) = 1$. A similar argument can be made for $b_{n_1} = -1$, which completes the proof.

\end{proof}

We are now ready to prove Theorem \ref{thm_noisy}. The proof for the RDD detector is obtained by combining Lemmas \ref{lemma_noise_bound}, \ref{lemma_active_user} and \ref{lemma_symbols}. Lemma \ref{lemma_noise_bound} ensures that the event $\mathcal{G}$ occurs with probability at least as high as one minus (\ref{high_prob_noisy}). Whenever $\mathcal{G}$ occurs, Lemma \ref{lemma_active_user} guarantees by using (\ref{support}), that the RDD detector can correctly detect active users under the condition (\ref{cond_thresholding}), i.e. $\mathcal{G}\subset \{\hat{\mathcal{I}} = \mathcal{I}\}$. 
Finally, whenever $\mathcal{G}$ occurs, Lemma \ref{lemma_symbols} guarantees that, based on the correct support of active users, their transmitted symbols can be detected correctly under the condition (\ref{sign_cond_1}), i.e. $\mathcal{G}\subset \{\hat{b}_n = b_n, n\in\mathcal{I}\}$. Clearly condition (\ref{sign_cond_1}) is weaker than (\ref{cond_thresholding}), since (\ref{cond_thresholding}) can be written as $|r_{\min}|-(K-1)\mu |r_{\max}| \geq \tau + (\tau + K\mu |r_{\max}|) > \tau$, and hence if (\ref{cond_thresholding}) holds then (\ref{sign_cond_1}) also holds. In summary, under condition (\ref{cond_thresholding}), $\mathcal{G}\subset \{\hat{\mathcal{I}}=\mathcal{I}\}\cap \{\hat{\bb} = \bb\}$, and $1-P_e  = P(\{\hat{\mathcal{I}}=\mathcal{I}\}\cap \{\hat{\bb} = \bb\}) \geq P(\mathcal{G})$, which is greater than one minus (\ref{high_prob_noisy}), which concludes the proof for the RDD detector.

The proof for RDDt is similar to that for RDD detector and inspired by the proof of Theorem 1 in \cite{BajwaCalderbankJafarpour2010}.  Using similar arguments to Lemma \ref{lemma_active_user}, we can demonstrate that, when the number of active users $K \leq K_0$, when $\mathcal{G}$ occurs,
\begin{equation}
\begin{split}
&\min_{n\in\mathcal{I}}|\mathbb{R}[\ab_n^H \yb]| 
> |r_{\min}| - (K-1)\mu|r_{\max}| - \tau \\&\qquad\geq |r_{\min}| - (K_0-1)\mu|r_{\max}| - \tau,
\end{split}
\end{equation}
and
\begin{equation}
\max_{n\notin\mathcal{I}}|\mathbb{R}[\ab_n^H \yb]| < K\mu|r_{\max}| + \tau \leq K_0\mu|r_{\max}| + \tau.
\end{equation}
If (\ref{cond_1}) holds for $K = K_0$, we can choose a threshold $\xi$ such that $K_0\mu|r_{\max}| + \tau < \xi < |r_{\min}| - (K_0-1)\mu|r_{\max}| - \tau$. Then $\min_{n\in\mathcal{I}}|\mathbb{R}[\ab_n^H \yb]| > \xi$ and $\max_{n\notin\mathcal{I}}|\mathbb{R}[\ab_n^H \yb]| < \xi$, and hence for such $\xi$ the RDDt detector can correctly detect the active users with high probability. Since when (\ref{cond_1}) holds, (\ref{sign_cond_1}) is true, from Lemma \ref{lemma_symbols} we know the symbol can be correctly detected with high probability as well. 

We now prove the performance guarantee for the RDDF detector adopting the technique used in proving Theorem 4 in \cite{Ben-HaimEldarElad2010}. First we show that whenever $\mathcal{G}$ occurs, the RDDF detector correctly detects an active user in the first iteration, which follows from Lemmas \ref{lemma_noise_bound} and \ref{lemma_active_user}. Note that (\ref{cond_OMP}) implies (\ref{cond_2}), and therefore, by Lemma \ref{lemma_active_user}, we have that by choosing the largest $|\Re[\ab_n^H \yb]|$, the RDDF detector can detect a correct user in the set $\mathcal{I}$.  Second, we show that whenever $\mathcal{G}$ occurs, the RDDF detector correctly detects the transmitted symbol of this active user. Note that (\ref{cond_OMP})  also implies (\ref{sign_cond_2}), since (\ref{cond_OMP}) can be written as $|r_{\min}|\geq 2\tau/[1-(2K-1)\mu]$, which implies $|r_{\max}|\geq 2\tau/[1-(2K-1)\mu]$, and hence $|r_{\max}| + |r_{\min}| - 2(K-1)\mu |r_{\max}| \geq 2\tau [1-2(K-1)\mu]/[1-(2K-1)\mu] + |r_{\min}| > 2\tau$, since $[1-2(K-1)\mu]/[1-(2K-1)\mu] \geq1$. Therefore, by Lemma \ref{lemma_symbols}, using a sign detector, we can detect the symbol correctly. Consequently, the first step of the RDDF detector correctly detect the active user and its symbol, i.e. $\mathcal{G}\subset \{{\mathcal{I}}^{(1)} \subset \mathcal{I}, {b}_{n_1}^{(1)} = b_{n_1}\}$.

The proof now continues by induction. Suppose we are currently in the $k$th iteration of the RDDF detector, $1\leq k \leq K$, and assume that $k-1$ correct users and their symbols have been detected in all the $k-1$ previous steps. The $k$th step is to detect the user with the largest $|\Re[\ab_n^H \vb^{(k-1)}]|$. Using the same notations as those in Section \ref{sec:algorithm_II} and by definition of $\vb^{(k-1)}$, we have
\begin{equation} 
\vb^{(k-1)} = \A \R (\bb - \bb^{(k-1)}) + \wb = \A\R\xb^{(k-1)} + \wb, \label{model_2}
\end{equation}
where $\xb^{(k-1)}\triangleq \bb - \bb^{(k-1)}$. This vector has support $\mathcal{I}/\mathcal{I}^{(k-1)}$ and has at most $K-k+1$ non-zero elements, since $\bb^{(k-1)}$ contains correct symbols at the correct locations for $(k-1)$ active users, i.e. ${b}^{(k-1)}_n = b_n$, for $ n\in {{\mathcal{I}}^{(k-1)}}$. This $\vb^{(k-1)}$ is a noisy measurement of the vector $ \A \R\xb^{(k-1)}$. The data model in (\ref{model_2}) for the $k$th iteration is identical to the data model in the first iteration with $\bb$ replaced by $\xb^{(k-1)}$ (with a smaller sparsity $K-k+1$ rather than $K$), $\mathcal{I}$ replaced by $\mathcal{I}/{\mathcal{I}}^{(k-1)}$, and $\yb$ replaced by $\vb^{(k-1)}$.
Let $|r_{\max}^{(k)}|\triangleq \max_{n\in\mathcal{I}/{\mathcal{I}}^{(k-1)}} |r_n|$. By assumption, $k-1$ active users with largest gains have been correctly detected in the first $k-1$ rounds, and hence $|r_{\max}^{(k)}| = |r^{(k)}|$. Since
\begin{equation}
|r^{(k)}| \geq |r_{\min}|,\label{k_iteration}
\end{equation}
we have that under condition (\ref{cond_OMP}) this model (\ref{model_2}) also satisfies the requirement (\ref{cond_2}). Consequently, by Lemma \ref{lemma_active_user}, we have that under the event $\mathcal{G}$,
$\max_{n \in \mathcal{I}/{\mathcal{I}}^{(k-1)}} |\Re[\ab_n^H  \vb^{(k-1)}]| > \max_{n \in (\mathcal{I}/{\mathcal{I}}^{(k-1)})^c} |\Re[\ab_n^H \vb^{(k-1)}]|$.
Therefore, in the $k$th iteration, the RDDF detector can detect an active user correctly, i.e. $\mathcal{G}\subset \{{\mathcal{I}}^{(k)} \subset \mathcal{I}\}$, and no index of active users that has been detected before will be chosen again. On the other hand, since (\ref{cond_OMP}) can be written as $|r_{\min}| \geq 2\tau/[1-(2K-1)\mu]$, from (\ref{k_iteration}) this implies $|r^{(k)}| \geq 2\tau/[1-(2K-1)\mu]$, and hence $|r^{(k)}| - (2K-1)\mu |r^{(k)}|\geq 2\tau$, and consequently $|r^{(k)}| - (2K-2)\mu |r^{(k)}| + |r_{\min}|\geq 2\tau$. Consequently, condition (\ref{sign_cond_2}) is true for (\ref{model_2}). Then by Lemma \ref{lemma_symbols}, we have that under the event $\mathcal{G}$,
$\sgn(r_{n_k}\Re[\ab_{n_k}^H \vb^{(k-1)}]) = b_{n_k}$,
i.e. $\mathcal{G}\subset \{{b}_{n_k}^{(k)} = b_{n_k}\}$.
By induction, since no active users will be detected twice, it follows that the first $K$ steps of the RDDF detector can detect all active users and their symbols, i.e. 
\begin{equation}
\begin{split}
&\mathcal{G}\subset \cup_{k=1}^K \{{\mathcal{I}}^{(k)} \subset \mathcal{I},{b}_{n_k}^{(k)} = b_{n_k}\} \\
&= \{{\mathcal{I}}^{(K)}=\mathcal{I}, {b}_{n}^{(K)} = b_{n}, n\in\mathcal{I}^{(K)}\}. 
\end{split}\label{eqn_99}
\end{equation} 
Note that condition (\ref{cond_2}) is weaker than (\ref{cond_OMP}), since (\ref{cond_OMP}) can be written as $|r_{\min}|[1-(2K-1)\mu] \geq 2\tau$, which implies $|r_{\max}|[1-(2K-1)\mu] \geq 2\tau$. This further implies $|r_{\max}|[1-2(K-1)\mu] + |r_{\min}| \geq 2\tau$, since $1-2(K-1)\mu\geq 1-(2K-1)\mu$ and $|r_{\min}|\geq 0$.
Consequently, under condition (\ref{cond_OMP}), from (\ref{eqn_99}), $\mathcal{G}\subset \{\hat{\mathcal{I}}=\mathcal{I}\}\cap \{\hat{\bb} = \bb\}$, and $1-P_e  = \mathbb{P}\{\{\hat{\mathcal{I}}=\mathcal{I}\}\cap \{\hat{\bb} = \bb\}\} \geq \mathbb{P}\{\mathcal{G}\}$ which is greater than one minus (\ref{high_prob_noisy}), which concludes the proof for the RDDF detector. 

The proof for RDDFt  follows the above proof for RDDF with one more step. Note that when we have correctly detected all active users in $K$ rounds, from (\ref{model_2}) the residual $\vb^{K} = \wb$ contains only noise. Hence, when $\mathcal{G}$ occurs, $\|\A^H\vb^{K}\|_\infty = \|\A^H \wb\|_\infty = \max_{1\leq n \leq N} |\ab_n^H \wb| < \tau$, from Lemma \ref{lemma_noise_bound}. On the other hand, in the $k$-th round, $k = 1, \ldots, K$, from (\ref{model_2}), we have that when $\mathcal{G}$ occurs
\begin{eqnarray}
&&\|\A^H\vb^{(k-1)}\|_\infty\nonumber \\
&=& \max_{1\leq n \leq N} \left|\sum_{m\in \mathcal{I}/\mathcal{I}^{(k-1)}} r_m b_m \ab_n^H \ab_m + \ab_n^H \wb\right|\\
&>& |r_{\max}^{(k)}| - (K-k)\mu|r_{\max}^{(k)}| - \tau > 0. \label{lower1}
\end{eqnarray}
The expression in (\ref{lower1}) is positive, when (\ref{cond_OMP}) holds (recall that (\ref{cond_OMP}) is also required to detect correct active users): because when $|r_{\min}| - (2K-1)\mu|r_{\min}| > 2\tau$, since $|r^{(k)}| \geq |r_{\min}|$, $|r^{(k)}| - (2K-1)\mu|r^{(k)}| > 2\tau$, and hence $|r^{(k)}| - (2K-2k-1)\mu|r^{(k)}| > 2\tau \geq 0$. Therefore when (\ref{cond_OMP}) holds, we can choose $\epsilon < \min_{k=1}^K \{|r^{(k)}|[1 - (K-k)\mu] - \tau\} < r_{\min} - \tau$.  
Therefore, under the condition (\ref{cond_OMP}), when $\mathcal{G}$ occurs, we can choose $\tau <\epsilon < \min_{k=1}^K \{|r^{(k)}|[1 - (K-k)\mu]| - \tau\}$, so that $\|\A^H\vb^{(k-1)}\|_\infty > \epsilon$, $k = 1, \ldots, K$, and $\|\A^H\vb^{(K)}\|_\infty < \epsilon$. Finally, using similar arguments as for RDDF that (\ref{cond_OMP}) guarantees (\ref{sign_cond_2}), RDDFt can also correctly detect the symbols with high probability.

This completes the proof of Theorem \ref{thm_noisy}.

\bibliography{yao}

\end{document}